\newcommand{\Nats}{{\mathbb{N}}}
\newcommand{\eps}{\varepsilon}
\newcommand{\argmin}{\operatornamewithlimits{argmin}}
\newcommand{\argmax}{\operatornamewithlimits{argmax}}
\newcommand{\floor}[1]{\lfloor{#1}\rfloor}
\renewcommand{\mod}{\text{ mod }}
\newcommand{\E}{{\mathbf{E}}}
\newcommand{\Var}{{\mathbf{Var}}}
\newcommand{\full}[1]{{#1}}
\newcommand{\conf}[1]{}
\def\F{{\cal F}}
\def\H{{\cal H}}
\def\P{{\cal P}}
\def\Q{{\cal Q}}
\def\S{{\cal S}}
\title{Range-efficient consistent sampling and locality-sensitive hashing for polygons\footnote{This work was supported under Australian Research Council's Discovery Projects funding scheme (project number DP150101134, Gudmundsson) and the European Research Council under the European Union’s 7th Framework Programme (FP7/2007-2013 / ERC grant agreement no.~614331, Pagh). \conf{A full version is available at arXiv~\cite{geomLSHarxiv}.}}}
\author[1]{Joachim Gudmundsson}
\affil[1]{University of Sydney, Australia\\ \texttt{joachim.gudmundsson@sydney.edu.au}}
\author[2]{Rasmus Pagh}
\affil[2]{IT University of Copenhagen, Denmark\\ \texttt{pagh@itu.dk}}
\authorrunning{Joachim Gudmundsson and Rasmus Pagh}
\subjclass{E.1 DATA STRUCTURES; F.2.2 Nonnumerical Algorithms and Problems -- Geometrical problems and computations}
\keywords{Locality-sensitive hashing, probability distribution, polygon, min-wise hashing, consistent sampling}
\begin{document}

\maketitle

\abstract{
Locality-sensitive hashing (LSH) is a fundamental technique for similarity search and similarity estimation in high-dimensional spaces.
The basic idea is that similar objects should produce hash collisions with probability significantly larger than objects with low similarity.
We consider LSH for objects that can be represented as point sets in either one or two dimensions.
To make the point sets finite size we consider the subset of points on a grid.
Directly applying LSH (e.g.~min-wise hashing) to these point sets would require time proportional to the number of points.
We seek to achieve time that is much lower than direct approaches.

Technically, we introduce new primitives for \emph{range-efficient} consistent sampling (of independent interest), and show how to turn such samples into LSH values.
Another application of our technique is a data structure for quickly estimating the size of the intersection or union of a set of preprocessed polygons.
Curiously, our consistent sampling method uses transformation to a geometric problem.
}

\section{Introduction}

Suppose that you would like to search a collection of polygons for a shape resembling a particular query polygon.
Or that you have a collection of discrete probability distributions, and would like to search for a distribution that resembles a given query distribution.
A framework for addressing this kind of question is \emph{locality-sensitive hashing} (LSH), which seeks to achieve hash collisions between similar objects, while keeping the collision probability low for objects that are not very similar.
Arguably the most practically important LSH method is \emph{min-wise} hashing, which works on any type of data where similarity can be expressed in terms of \emph{Jaccard similarity} of sets, i.e., the ratio between the size of the intersection and the size of the union of the sets.
Indeed, the seminal papers of Broder et al.~introducing min-wise hashing~\cite{broder1997resemblance,broder1997syntactic} have more than 1000 citations.
Independently, Cohen~\cite{cohen1997size} developed estimation algorithms based on similar ideas (see also~\cite{cohen2007summarizing}).
The basic idea behind min-wise hashing is to map a set $S$ to $\argmin_{x\in S} h(x)$, which for a strong enough hash function $h$ gives collision probability equal (or close) to the Jaccard similarity%
\full{ (see e.g.~\cite{broder2000min} for a discussion of sufficient requirements on $h$)}%
.

If we represent discrete probability distributions by histograms there is a one-to-one relationship between the Jaccard similarity of two histograms and the statistical distance between the corresponding distributions. So a search for close distributions in terms of Jaccard similarity will translate into a search for distributions that are close in statistical distance, see Figure~\ref{fig:histogram}.

\begin{figure}[t]
  \begin{center}
  \begin{minipage}{0.8\textwidth}
  \begin{center}
  \includegraphics[width=5.5cm]{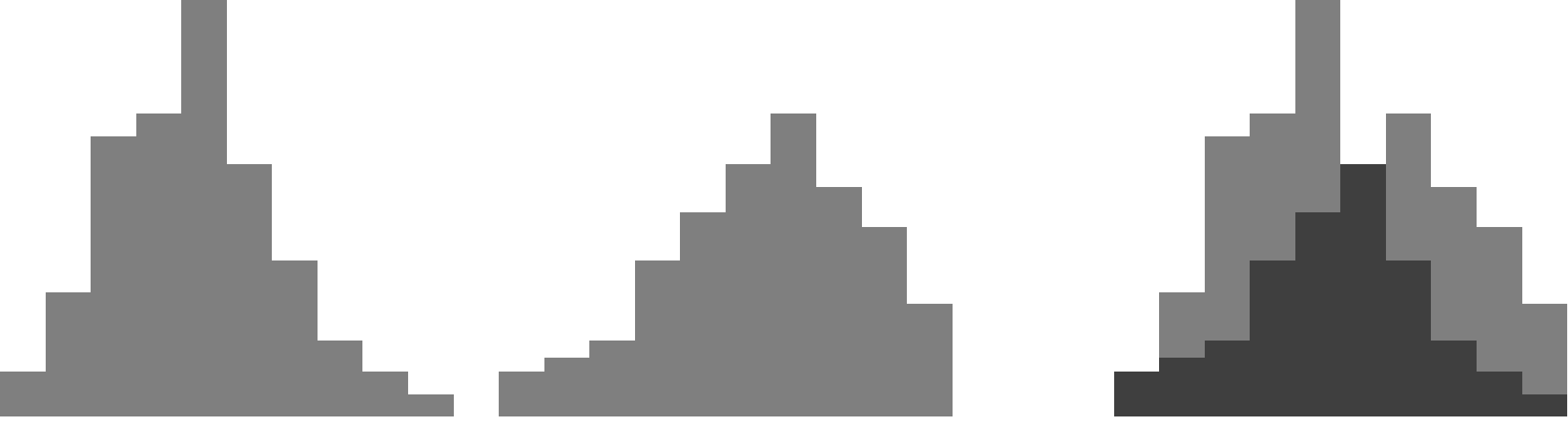}\vspace{-5mm}
  \end{center}
  \caption{Left: Two probability distributions represented as histogram point sets. Right: The statistical distance can be computed from the Jaccard similarity.}\label{fig:histogram}
  \end{minipage}
  \end{center}
\end{figure}

To make min-wise hashing well-defined on infinite point sets in the plane we may shift to an approximation by considering only those points contained in a finite grid of points.
However, for a good approximation these sets must be very large, which means that computing a hash value~$h(x)$ for each point $x\in S$, in order to do min-wise hashing, is not attractive.

\subsection{Our results}

We consider efficient locality-sensitive hashing for objects that can be represented as point sets in either one or two dimensions, and whose similarity is measured as the Jaccard similarity of these point sets.
The model of computation considered is a Word RAM with word size at least $\log p$, where $p$ is a prime number.
We use integers in $U=\{0,\dots,p-1\}$ (or equivalently elements in the field $\F_p$ of size $p$) to represent coordinates of points on the grid.
Our first result concerns histograms with $n$ values in $U$.

\begin{theorem}\label{thm:minwise-weighted}
    For every constant $\eps>0$ and every integer $N$ it is possible to choose an explicit hash function $H: U^n \rightarrow \mathbb{N}$ that has constant description size, can be evaluated in time $O(n \log p)$, and for which $\Pr[H({\bf x}) = H({\bf y})] \in [J-\eps;J+\eps]$,
    where $J = \frac{\sum_i \min(x_i,y_i)}{\sum_i \max(x_i,y_i)}$ is the weighted Jaccard similarity of vectors ${\bf x} = (x_1,\dots,x_n)$ and ${\bf y} = (y_1,\dots,y_n)$ of weight $\sum_i x_i = \sum y_i = N$.
\end{theorem}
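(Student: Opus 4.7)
The plan is to reduce the weighted-Jaccard problem to an ordinary (unweighted) min-wise sampling problem over a two-dimensional point universe, and then invoke a range-efficient min-wise primitive to achieve the $O(n \log p)$ runtime.

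First I would associate to each histogram $\mathbf{x} \in U^n$ the 2D point set
\[ P_\mathbf{x} = \{(i,j) : 1 \le i \le n,\ 1 \le j \le x_i\} \subseteq \{1,\dots,n\} \times U . \]
A direct counting argument, splitting by the first coordinate, gives $|P_\mathbf{x} \cap P_\mathbf{y}| = \sum_i \min(x_i,y_i)$ and $|P_\mathbf{x} \cup P_\mathbf{y}| = \sum_i \max(x_i,y_i)$, so the ordinary Jaccard similarity of $P_\mathbf{x}$ and $P_\mathbf{y}$ is exactly the weighted Jaccard similarity $J$ in the statement. Note that because $x_i \in U$ the set $P_\mathbf{x}$ is a disjoint union of $n$ vertical ``columns'' of length at most $p-1$; it is this columnar structure that will make range-efficient hashing possible.

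Next I would define $H(\mathbf{x}) = \argmin_{s \in P_\mathbf{x}} h(s)$, where $h$ is drawn from an explicit, constant-description, $\eps$-approximately min-wise family on the universe $\{1,\dots,n\}\times U$. The standard min-wise argument, that $H(\mathbf{x}) = H(\mathbf{y})$ iff $\argmin_{s\in P_\mathbf{x}\cup P_\mathbf{y}} h(s)$ falls in $P_\mathbf{x}\cap P_\mathbf{y}$, immediately gives $\Pr[H(\mathbf{x}) = H(\mathbf{y})] \in [J-\eps,\,J+\eps]$. Correctness of the collision probability is therefore essentially free from the reduction.

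The real work is the running time. Since $P_\mathbf{x}$ decomposes into $n$ 1D segments $\{i\}\times[1,x_i]$, it suffices to compute $m_i := \min_{1\le j\le x_i} h(i,j)$ in $O(\log p)$ time per column and then take $\argmin_i m_i$. The main obstacle, and the heart of the proof, is to exhibit an $h$ that is simultaneously (a) $\eps$-min-wise on arbitrary subsets of the universe, (b) storable in constant space, and (c) admits $O(\log p)$-time minimum queries over every 1D interval $\{i\}\times[1,k]$. This is precisely the ``transformation to a geometric problem'' advertised in the abstract: I would appeal here as a black box to the range-efficient consistent sampling primitive developed later in the paper, which realizes a column-min query by a geometric search (e.g.\ a nearest-anchor computation) of logarithmic depth. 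Plugging that primitive into the reduction above completes the proof.
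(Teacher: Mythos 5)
Your reduction to the 2D point sets $P_{\mathbf{x}}$ and the observation that the unweighted Jaccard similarity of $P_{\mathbf{x}},P_{\mathbf{y}}$ equals the weighted Jaccard $J$ of $\mathbf{x},\mathbf{y}$ are both correct and match the paper. The gap is in the hashing step: you assume the existence of a single hash function $h$ that is \emph{simultaneously} (a) $\eps$-min-wise independent and (c) supports $O(\log p)$-time minimum queries over 1D intervals, and you appeal to the paper's range-efficient primitive as a black box for this. But that primitive (Lemma~\ref{lemma:minwise-interval} / Corollary~\ref{cor:bottom-k}) works \emph{only} for the 2-independent linear family $(ax+b)\bmod p$, because its geometric reduction to integer convex hulls of a line segment is specific to the linear form. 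Linear 2-independent functions are \emph{not} $\eps$-min-wise — the paper explicitly cites Thorup~\cite{thorup2013bottom} for the fact that min-wise hashing with 2-independence gives a collision probability that need not even be a function of $J$. Conversely, the $\eps$-min-wise families that exist (Indyk's $O(\log(1/\eps))$-independent polynomials, or twisted tabulation) do not admit the convex-hull range query, since the associated graph is not a line. So no hash function satisfying both (a) and (c) is exhibited, and requirements (a) and (c) are in genuine tension.

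The paper resolves exactly this tension with a two-phase construction that you have not reproduced. Phase one uses the 2-independent linear family $H_2$, which \emph{is} range-efficient, to produce a \emph{consistent sample} of $P_{\mathbf{x}}$ of expected constant size (below a threshold chosen so that expected sample size is $\Theta(1/(\eps^3\alpha))$). Phase two applies a separate $\eps/4$-min-wise-independent function $f$ to that small sample only, and outputs $\argmin f$ over the sample. Lemma~\ref{lemma:transform} then shows, via a Chebyshev/second-moment argument on the sample sizes, that the collision probability of this two-stage hash is $J \pm \eps$. The collision probability is therefore not ``essentially free from the reduction'' — it requires the concentration argument of Lemma~\ref{lem:concentration}, and the carefully balanced error budget ($\eps/4$ from $f$, $\eps/2$ from the sampling estimates). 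To fix your proof you need to replace the single-hash $\argmin$ with this sample-then-rehash scheme and invoke Lemma~\ref{lemma:transform} for the collision bound.
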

Our construction gives an explicit alternative to existing results on weighted min-wise hashing (see~\cite{haeupler2014consistent,ioffe2010improved}) whose analysis relies on hash functions that are fully random and cannot be described in small space.
It was previously shown that a form of priority sampling based on 2-independence can be used to estimate Jaccard similarity of histograms~\cite{thorup2013bottom}, but similarity estimation is less general than locality-sensitive hashing methods such as weighted min-wise hashing.

We proceed to show the generality of our technique by presenting an LSH method for geometric objects.
We will use approximation to achieve high performance even for ``hard'' shapes, and adopt the so-called \emph{fuzzy model}
~\cite{am-ars-00}.
In a fuzzy polygon, points that are ``close'' to the boundary (relative to
the polygon’s diameter) may or may not be included in the polygon. That is, given a polygon $P$ and real value $0< \phi \leq 1$, define
the outer range $P^+=P^+(w)$
to be the locus of points whose distance from a point interior to $P$ is at most $w=\phi\cdot d(P)$,
where  $d(P)$ is the diameter of~$P$. The inner range $P^-=P^-(w)$ of $P$ is defined symmetrically.

Using the fuzzy model a \emph{valid answer} to the Jaccard similarity of two polygons $P_1$ and $P_2$ w.r.t.~$\phi$ is any value $\frac{X_{\cap}}{X_{\cup}}$ such that $A(P^-_1 \cap P^-_2) \leq X_{\cap} \leq A(P^+_1\cap P^+_2)$ and $A(P^-_1\cup P^-_2) \leq X_{\cup} \leq A(P^+_1\cup P^+_2)$, where $A(\cdot)$ denotes the area of the region.
To simplify the statement of the theorem we say that a polygon is \emph{$\alpha$-dense} in a rectangle $I$ if for some value $\alpha>0$ its area is at least a fraction $\alpha$ of the area of $I$. We use this to bound the time it takes to generate the sample points.
\begin{theorem} \label{thm:lsh-polygon}
For every choice of constants $\eps>0$, $\phi>0$ and square $I\subseteq {\bf R}^2$ it is possible to choose an explicit random hash function $H$
whose description size is constant, that can be evaluated in time $O((t\log p)/\alpha)$%
, where $t$ is the time to test if a given point lies inside a polygon, and with the following guarantee on collision probability:
Let $P_1, P_2 \subseteq I$ be polygons such that
$P_1^+$ and $P_2^+$ are $\alpha$-dense in $I$.
Then $\Pr[H(P_1)=H(P_2)] \in [J-\eps;J+\eps]$, where $J$ is some valid Jaccard similarity of $P_1$ and $P_2$ in the fuzzy model with parameter $\phi$.
\end{theorem}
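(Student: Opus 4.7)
The plan is to reduce the continuous polygon problem to the discrete min-hash setting of Theorem~\ref{thm:minwise-weighted} by discretising onto the $p \times p$ grid $G$ implicit in the word-RAM model, and then to avoid touching all of $G$ by exploiting the $\alpha$-density hypothesis.

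\textbf{Discretisation and hashing.} For each polygon $P_j$ let $S_j = \{x \in G : x \in P_j\}$. Taking $p$ large enough (or rescaling $I$) so that every grid cell has diameter at most a small fraction of $\phi \cdot d(P_j)$ ensures that any $x \in G$ within that distance of $\partial P_j$ lies in the fuzzy band $P_j^+ \setminus P_j^-$, so the inclusion of such boundary points in $S_j$ is free under the fuzzy model; note that $\alpha$-density of $P_j^+$ in $I$ forces $d(P_j)$ to be a constant fraction of $d(I)$, so $\log p$ remains bounded in terms of $\phi,\alpha,\eps$. Consequently $J := |S_1 \cap S_2|/|S_1 \cup S_2|$ is a valid fuzzy-model Jaccard similarity for $P_1$ and $P_2$, and applying (the construction behind) Theorem~\ref{thm:minwise-weighted} with unit weights on $S_j$ yields an explicit hash $h \colon G \to \mathbb{N}$ of constant description for which $H(S) := \argmin_{x \in S} h(x)$ satisfies $\Pr[H(S_1) = H(S_2)] \in [J - \eps, J + \eps]$.

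\textbf{Fast evaluation.} Direct evaluation of $H(S_j)$ would cost $\Theta(|G| \log p)$, so instead I would enumerate the points of $G$ in increasing order of their $h$-value using the range-efficient consistent sampler the paper develops, and output the first point that passes the point-in-polygon test for $P_j$; by definition of $\argmin$ this equals $H(S_j)$. Because $P_j^+$ is $\alpha$-dense in $I$, a uniformly random element of $G$ belongs to $P_j$ with probability at least $\alpha$ (up to negligible boundary slack), so a standard tail bound shows that only $O(1/\alpha)$ points need be probed in expectation. Each probe costs $O(\log p)$ for the sampler to emit the next point together with $O(t)$ for the point-in-polygon test, giving the claimed $O((t \log p)/\alpha)$ running time. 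Crucially, the enumeration of $G$ depends only on $h$ and $I$, not on the polygons, so collisions on $S_1 \cap S_2$ are preserved and the collision probability equals that of the underlying min-hash.

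\textbf{Main obstacle.} The heavy lifting sits in the range-efficient sampler that enumerates $G$ in $h$-order at $O(\log p)$ amortised cost per step --- exactly the primitive the paper builds in support of Theorem~\ref{thm:minwise-weighted}, so once it is in hand the polygon LSH is a clean reduction. The subtle verification is that $\alpha$-density of $P_j^+$ (rather than of $P_j$ itself) suffices: a probe landing in $P_j^+ \setminus P_j$ may be classified either way under the fuzzy model, and using the point-in-polygon test as the decider both keeps the enumeration oblivious to the polygons and still yields success probability $\Omega(\alpha)$ per probe, provided the grid spacing is small enough that the area difference between $P_j^+$ and $P_j$ is controlled relative to $A(P_j)$ --- which holds under the parameter choices above.
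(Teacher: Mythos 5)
Your high-level plan (discretize to a grid, sample, filter by the polygon, use density to bound the number of probes) is the right shape, but the "Fast evaluation" step as written does not match what the paper's primitives actually provide, and it inherits a problem the paper goes out of its way to avoid.

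First, you describe $H(S) = \argmin_{x\in S} h(x)$ as a single min-hash and then propose to "enumerate the points of $G$ in increasing order of their $h$-value" and return the first one inside the polygon. The 2D sampler the paper develops (Lemma~\ref{lemma:subsampling-rectangle}) does \emph{not} enumerate a rectangle in $h$-order; for the family $(x,y)\mapsto(ax+by+c)\bmod p$ the paper explicitly says it does not know how to compute a rectangle min-hash, only how to report all points with $h(x,y)=0$ at a fixed sampling rate $1/p$. The 1D order-by-hash primitive (Corollary~\ref{cor:bottom-k}) does not extend to 2D here. More importantly, even if you \emph{could} produce such an enumeration for a 2-independent $h$, taking the min-hash under $h$ is exactly the thing the paper points out (via Thorup's observation) does \emph{not} give collision probability close to Jaccard similarity. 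The paper's Lemma~\ref{lemma:transform} is precisely there to repair this: it computes the consistent sample $I'\cap S$ with the cheap 2-independent $h$ and then applies a \emph{second}, $\eps/4$-minwise-independent hash $f$ to that small sample, with $H^*(S)=\argmin_{x\in I'\cap S}f(x)$. Your proposal collapses this 2-phase construction into one min-hash, which is the very construction the paper rejects. You need the sample-then-strong-hash structure, and the time bound comes from the expected sample size being constant, not from an $O(1/\alpha)$ prefix of an $h$-ordering.

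Second, you filter by membership in $P_j$ itself and claim success probability $\Omega(\alpha)$ from $\alpha$-density of $P_j^+$. That does not follow: $P_j^+$ being an $\alpha$-fraction of $I$ puts no lower bound on $A(P_j)$, so $P_j$ could be a thin, snaking polygon with $A(P_j)\ll A(P_j^+)$, and your point-in-$P_j$ test would then reject almost all sampled grid points. The paper instead samples over the bounding box of $I\cap P_i^+(w_i/2)$ and filters by membership in the \emph{thickened} region $P_i^+(w_i/2)$; this is exactly what makes the $\alpha$-density hypothesis on $P^+$ usable, and it is also what makes the answer a valid fuzzy-model Jaccard similarity via the two inequalities proven in Lemma~\ref{lem:ApxArea}. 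Your "boundary slack is negligible / grid spacing controls the area gap" argument implicitly assumes $A(P_j^+)-A(P_j)$ is small relative to $A(P_j)$, which is not implied by any hypothesis in the theorem.
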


It is an interesting problem whether the additive error in Theorems~\ref{thm:minwise-weighted} and~\ref{thm:lsh-polygon} can be improved to a multiplicative $1+\eps$ error.

In Section~\ref{sec:polygons} we present further applications of our technique and show how a small summary can be constructed for a set $\P$ of polygons such that for any subset $\Q$ of $\P$, an estimate of the area of $\cap \Q$ and $\cup \Q$ can be computed efficiently in the fuzzy model with respect to $\phi$.

\subparagraph*{Techniques.}
Our main technical contribution lies in methods for \emph{range-efficient min-wise hashing} in one and two dimensions,
efficiently implementing min-wise hashing for intervals and rectangles.
More specifically, we consider intervals in $U$ and rectangles in $U \times U$.
The new technique can be related to earlier methods for \emph{sampling} items with small hash values in
one or more dimensions~\cite{pavan2007range,tirthapura2012rectangle}.
(In fact, en route we obtain new hash-based sampling algorithms with improved speed, which may be of independent interest.)
However, using~\cite{pavan2007range,tirthapura2012rectangle} to sample a single item is not likely to yield a good locality-sensitive hash function.
The reason is that the hash functions used in these methods are taken from simple, 2-independent families and, as explained by Thorup~\cite{thorup2013bottom}, min-wise hashing using 2-independence does not in general yield collision probability that is close to (or even a function of) the Jaccard similarity.
Instead we use a 2-phase approach: First produce a sample of $k$ elements having the smallest hash values, and then perform standard min-wise hashing on a carefully selected \emph{subset} of the sample using a \emph{different} hash function.

We can combine and filter the samples to handle a variety of point sets that are not intervals or rectangles.
To create a sample for a subset of a rectangle we can generate a 
sample of the rectangle, and then filter away those sample points that are not in the subset.
This is efficient if the subset is suitably dense in the rectangle (which we ensure by working in the fuzzy model).
To create a sample from the union of two sets, simply take the union of the samples.
Theorems~\ref{thm:minwise-weighted} and~\ref{thm:lsh-polygon} are obtained in this way, and it would be possible to instantiate many other applications.

At the heart of our range-efficient sampling algorithms for one and two dimensions lies a reduction to the problem of finding an integer point (or integer points) in a given interval with small vertical distance to a given line. Such a point can effectively be found by traversing the integer convex hull of the line. Using a result of Charrier and Buzer~\cite{cb-arnrn-09} this can be done in logarithmic time. Thus, geometry shows up in an unexpected way in the solution.

\subsection{Comparison with related work}

We are not aware of previous work dealing with range-efficient locality-sensitive hashing.
The most closely related work is on range-efficient \emph{consistent} (or \emph{coordinated}) sampling, which is a technique for constructing summaries and sketches of large data sets.
The technique comes in two flavors: \emph{bottom-$k$} (or min-wise) sampling, which fixes the sample size, and \emph{consistent sampling} (or \emph{sub-sampling}), which fixes the sampling probability.
In both cases the idea is to choose as a sample those elements from a set $S\subseteq U$ that have small hash values under a random hash function $h: U \rightarrow [0;1]$.
If the sample size is fixed and some hash values are identical then an arbitrary tie-breaking rule can be used, e.g., selecting the minimum element.
To make $\argmin$ uniquely defined, which is convenient, we take $\argmin_{x\in I} h(x)$ to be the smallest value $y\in I$ for which $h(y) = \min_{x\in I} h(x)$.
To denote the set of the $k$ elements having the smallest hash values (with ties broken in the same way) we use the notation $\argmin_k$.
We focus on settings in which~$U$ is large and it is infeasible to store a table of all hash values.

\subparagraph*{In one dimension.}
Pavan and Tirthapura~\cite{pavan2007range} consider the 2-independent family of linear hash functions in the field of size $p$, i.e., functions of the form $h(x) = (ax+b) \mod p$. They show how to 
find hash values $h(x)$ below a given threshold $\Delta$, where $x$ is restricted to an interval $I$. (See also~\cite{Bachrach2013} for another application of this primitive.)
The algorithm of Pavan and Tirthapura uses time $O(\log p + k)$, where $k$ is the number of elements $x\in I$ with $h(x) \leq \Delta$.
Using this in connection with doubling search leads to an algorithm finding the minimum hash value in time $O(\log^2 p)$.
In this paper we show how to improve the time complexity:
\begin{lemma}\label{lemma:minwise-interval}
	Let $h(x) = (ax+b) \mod p$, where $p$ is prime and $0\leq a,b < p$.
	Given $i_2 > i_1 > 0$ consider the interval $I=\{i_1,\dots,i_2\}$.
	It is possible to compute $\argmin_{x\in I} h(x)$ (the \emph{min-hash}
	of~$I$) in time $O(\log |I|)$.
\end{lemma}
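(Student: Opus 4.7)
The plan is to recast the algebraic minimisation as the geometric problem of finding a specific vertex of the integer hull of a line segment, and then invoke the logarithmic-time hull algorithm of Charrier and Buzer~\cite{cb-arnrn-09}.

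First, I reformulate geometrically. Writing the division $ax+b = p\lfloor(ax+b)/p\rfloor + h(x)$, the value $h(x)$ equals $p$ times the vertical distance from the integer point $(x,\lfloor(ax+b)/p\rfloor)$ up to the line $\ell: pY = aX + b$. Hence minimising $h(x)$ over integer $x\in I$ is equivalent to finding the topmost integer lattice point that lies weakly below $\ell$ inside the vertical strip $I\times \mathbb{R}$. A standard convexity argument shows that any such topmost point must be a vertex of the \emph{upper} integer convex hull of $\ell$ restricted to this strip: otherwise convexity produces a strictly higher lattice point in the strip that still lies below $\ell$, contradicting optimality.

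Second, I invoke the Charrier--Buzer algorithm, which, given a rational line and an interval of width $w$, computes the $O(\log w)$ vertices of the corresponding integer hull in $O(\log w)$ arithmetic operations. Applied to $\ell$ and $I$, this yields the vertex $(x^\star,y^\star)$ of the upper hull minimising the vertical gap to $\ell$; the corresponding $x^\star$ is a minimiser of $h$ on $I$.

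Third, I enforce the tie-breaking rule: return the smallest $x\in I$ attaining the minimum $r = h(x^\star)$. The set of integers $x$ with $(ax+b)\bmod p = r$ is an arithmetic progression of common difference equal to the multiplicative order of $p$ over the lattice of shifts of $\ell$ (concretely, translating $(x^\star,y^\star)$ by the primitive edge vector of the hull that is parallel to $\ell$ leaves $h$ unchanged). The leftmost representative of this progression inside $I$ is computed by one extended-Euclidean-style calculation in $O(\log |I|)$ time, giving the final answer.

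The main obstacle is the last step: one has to be careful that boundary effects at the two endpoints of $I$ (where the strip need not be cut at a hull vertex) and the tie-breaking rule are handled without leaving the logarithmic regime. Both are handled by observing that the primitive edge vectors of the integer hull encode exactly the shifts preserving $h$, so a constant number of local corrections at the strip boundaries suffices.
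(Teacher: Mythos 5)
Your proposal follows essentially the same route as the paper: recast minimising $h(x) = (ax+b)\bmod p$ as finding the lattice point at smallest vertical distance below the rational line $\ell$, observe that the minimiser lies on the upper integer convex hull, and invoke the Charrier--Buzer algorithm to compute the hull in $O(\log|I|)$ time. The tie-breaking and boundary elaborations in your third step are superfluous for the relevant parameter range — for $a>0$ and an interval of length less than $p$ the min-hash is unique, and $a=0$ is a trivial constant case handled separately by the paper — but they do not detract from the correctness of the argument.
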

We will argue in Section~\ref{sec:SubsamplingIntervals} that Lemma~\ref{lemma:minwise-interval} can be applied repeatedly to subintervals to output the~$k$ smallest hash values (and corresponding inputs) in time $O(k \log |I|)$.
The possibility of choosing $a=0$ is included for mathematical convenience (to ensure 2-independence), though in most applications it will be better to choose $a>0$ (which in addition makes $\argmin$ uniquely defined without a tie-breaking rule).

\subparagraph*{In more than one dimension.}
Tirthapura and Woodruff~\cite{tirthapura2012rectangle} consider another class of 2-independent functions, namely linear transformations on vectors over the field $\F_2$.
Integers naturally correspond to such vectors, and for a \emph{dyadic} interval $I$ containing all integers that share a certain prefix, the problem of finding elements in $I$ that map to zero is equivalent to solving a linear system of equations.
Since an arbitrary interval can be split into a logarithmic number of dyadic intervals they are able to compute all the integers that map to zero in polylogarithmic time.
The sampling probability can be chosen as an arbitrary integer power of two.
This method generalizes to rectangles in dimension $d\geq 2$.

In this paper we instead consider linear, 2-independent hash functions of the form
$(x,y) \mapsto (ax+by+c) \mod p \enspace .$
We do not know of a method for efficiently computing a min-hash over a rectangle for such functions, but we are able to efficiently implement consistent sampling with sampling probability~$1/p$.
\begin{lemma}\label{lemma:subsampling-rectangle}
	Let $h(x,y) = (ax+by+c) \mod p$, where $p$ is prime and $0\leq a,b,c < p$.
	Given $i_1 < i_2$ and $j_1 < j_2$ consider
	$I=\{i_1,\dots,i_2\} \times \{j_1,\dots,j_2\}$. It is possible to compute
	$I' = \{(x,y)\in I \; | \; h(x,y)=0\}$ in time $O((|I'|+1) \log (\min(i_2-i_1,j_2-j_1)))$.
\end{lemma}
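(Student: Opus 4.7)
The plan is to reduce the 2D enumeration problem to a 1D sub-sampling problem to which Lemma~\ref{lemma:minwise-interval} can be applied iteratively. The degenerate cases ($a=0$ or $b=0$) are handled directly: then the equation $ax+by+c\equiv 0\pmod p$ either pins down one coordinate and leaves the other free, or is constant, and we read off $I'$ in time $O(|I'|+1)$. So assume $a,b\neq 0$. Since $p$ is prime, $b$ is invertible mod~$p$, so for every $x$ there is a unique $y(x) = -b^{-1}(ax+c)\bmod p$ with $h(x,y(x))=0$, and $(x,y)\in I'$ iff $y(x)\in\{j_1,\ldots,j_2\}$. By symmetry we may alternatively solve for $x$ in terms of $y$. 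I pick whichever parameterization has the shorter outer interval, so that the ``outer'' range has length $\min(i_2-i_1,j_2-j_1)+1$; this is what will ultimately produce the $\log\min(\cdot,\cdot)$ factor.

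\textbf{Threshold conversion and iterated min-hash.} Define the shifted function $h'(x) = (y(x)-j_1)\bmod p$. Unrolling, $h'(x) = (-b^{-1}a\cdot x + (-b^{-1}c - j_1))\bmod p$, which is again an affine linear function modulo $p$, matching the hypothesis of Lemma~\ref{lemma:minwise-interval}. Moreover $y(x)\in\{j_1,\ldots,j_2\}$ iff $h'(x)\le \Delta$, where $\Delta := j_2-j_1$. Thus the problem becomes: enumerate all $x$ in an interval $I_x$ (the outer range) with $h'(x)\le\Delta$. I solve this by a standard split-and-recurse: using Lemma~\ref{lemma:minwise-interval}, compute $x^\ast = \argmin_{x\in I_x} h'(x)$ in time $O(\log|I_x|)$. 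If $h'(x^\ast)>\Delta$, report nothing on this branch; otherwise emit $x^\ast$ and recurse on the two subintervals $\{i_1,\ldots,x^\ast-1\}$ and $\{x^\ast+1,\ldots,i_2\}$ (both still of the form required by the lemma).

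\textbf{Analysis and main obstacle.} Each recursive call either outputs one point of $I'$ or returns empty, so the total number of calls is at most $2|I'|+1$. Every call costs $O(\log|I_x|) = O(\log\min(i_2-i_1,j_2-j_1))$, yielding the claimed $O((|I'|+1)\log\min(i_2-i_1,j_2-j_1))$ bound. The heavy lifting—finding the argmin of a linear map mod $p$ over an interval in $O(\log|I_x|)$ time—is entirely delegated to Lemma~\ref{lemma:minwise-interval} (and ultimately to the integer-convex-hull traversal of Charrier and Buzer). Given that primitive, the only genuinely delicate points of this proof are (i)~choosing the axis for the outer loop so that the logarithmic factor picks up the \emph{smaller} side of the rectangle, and (ii)~verifying that the shifted hash $h'$ still fits the form required by Lemma~\ref{lemma:minwise-interval} on every subinterval produced by the recursion—both of which are straightforward once set up as above.
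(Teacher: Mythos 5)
Your proof is correct and follows essentially the same approach as the paper: reduce to one dimension by solving for $y$ in terms of $x$ (choosing the shorter side as the outer range), turn membership of $y(x)$ in $\{j_1,\dots,j_2\}$ into a threshold condition $h'(x)\le\Delta$ on a derived linear hash $h'$, and then enumerate by repeatedly calling the interval min-hash primitive and recursing on the two subintervals on either side of the reported point. The paper phrases the same argument geometrically (a line segment $\ell'$, a vertical-distance threshold $j_2'/p$, and the integer convex hull of Charrier--Buzer), whereas you invoke Lemma~\ref{lemma:minwise-interval} as a clean black box; the underlying algorithm and cost analysis are identical.
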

For random $a$, $b$, $c$ the expected size of the sample $I'$ is $|I|/p$, and because of 2-independence the distribution of $|I'|$ is concentrated around this value.
Compared to the method of~\cite{tirthapura2012rectangle} ours is faster, but has the disadvantage that the sampling probability cannot be chosen freely.
However, as we will see this restriction is not a real limitation to our applications to locality sensitive hashing and size estimation.

\subparagraph*{From consistent sampling to LSH.}
Our technique for transforming a consistent sample to an LSH value is of independent interest.
Thorup~\cite{thorup2013bottom} shows that min-wise hashing using 2-independence does not in general yield collision probability that is close to (or even a function of) the Jaccard similarity.
On the positive side he shows that bottom-$k$ \emph{samples} of two sets made using a 2-independent hash function can be used to estimate the Jaccard similarity $J$ of the sets with arbitrarily good precision.
However, this does not yield a locality-sensitive hash function with collision probability (close to) $J$, and obvious approaches such as min-wise hashing applied to the samples fails to have the right collision probability.
Instead, we use consistent sampling (using a 2-independent family) followed by a stronger hash function for which min-wise hashing has the desired collision probability up to an additive error~$\eps$.
This transformation yields the first LSH family for Jaccard similarity (with proven guarantees on collision probability) where the function can be:
\begin{itemize}
\item evaluated in $O(n+\text{poly}(1/\eps))$ time on a set of size $n$, and
\item described and computed in a constant number of machine words (independent of $n$).
\end{itemize}
Previous such functions have used either time per element that grows as $\eps$ approaches zero~\cite{MR2002c:68104}, or required description space that is a root of $n$ (see~\cite{dahlgaard2014approximately}).

\subsection{Preliminaries}
We will make extensive use of 2-independence:

\begin{definition}
A family of hash functions $\H$ mapping $U$ to $U$ is called \emph{2-independent} if $\forall x_1,x_2, a_1, a_2 \in U$ with $x_1 \neq x_2$ and $h\in\H$
chosen uniformly we have
$$\Pr[h(x_1) = a_1 \wedge h(x_2) = a_2] = 1/|U|^2 \enspace .$$
\end{definition}

It will be convenient to use the notation $x\pm \Delta$ for a number in the interval $[x-\Delta;x+\Delta]$.

Carter and Wegman~\cite{carter1977universal} showed that the family
$\H_1 = \{ x \mapsto (ax+b) \text{ mod } p \; | \; a, b \in U\}$
is 2-independent on the set $U=\{0,\dots,p-1\}$ when $p$ is a prime. Finally, we make use of $\eps$-minwise independent families:
\begin{definition}
A family of hash functions $\H$ mapping $U$ to $\Nats$ is called \emph{$\eps$-minwise independent} if for every set $S\subset U$, every $y\in S$, and random $h\in \H$:
$\Pr[h(y) = \min h(S)] = (1\pm\eps)/|S|$.
\end{definition}
Indyk~\cite{MR2002c:68104} showed that an efficient $\eps$-minwise independent family mapping to a range of size $O(p/\eps)$ can be constructed by using an $O(\log(1/\eps))$-independent family of functions (e.g.~polynomial hash functions).
Dahlgaard and Thorup~\cite{dahlgaard2014approximately} showed that the evaluation time can be made constant, independent of $\eps$, by using space $|U|^{\Omega(1)}$.
If we only care about sets of size up to some number $\hat{n}$, this space usage can be improved to $(\hat{n}/\eps)^{\Omega(1)}$.


\section{Range-efficient bottom-$k$ sampling in one dimension} \label{sec:SubsamplingIntervals}

The aim of this section is to show Lemma~\ref{lemma:minwise-interval} and how it can be used to efficiently compute consistent as well as bottom-$k$ samples.
Together with the general transformation presented in Section~\ref{sec:transform} this will lead to Theorem~\ref{thm:minwise-weighted}.

Without loss of generality suppose $0 < i_1 < i_2 < p$, consider $I=\{i_1, \ldots, i_2\} \subseteq U$, and let $h\in \H_1= \{ x \mapsto (ax+b) \text{ mod } p \; | \; a, b \in U\}$.
To show Lemma~\ref{lemma:minwise-interval} we must prove that $\argmin_{x\in I} h(x)$ can be computed in time $O(\log |I|)$.
In case $a=0$ this is trivial (just output $i_1$), so we focus on the case $a>0$.
We will show how the problem can be reduced to the problem of finding the integer point at the smallest (vertical) distance below the line segment
\begin{equation}\label{eq:ell}
 \ell = \{ (x,(ax+b)/p) \; | \; x\in [i_1;i_2] \}.
\end{equation}
\begin{figure}[t]
\begin{center}
 \includegraphics[width=0.9\textwidth]{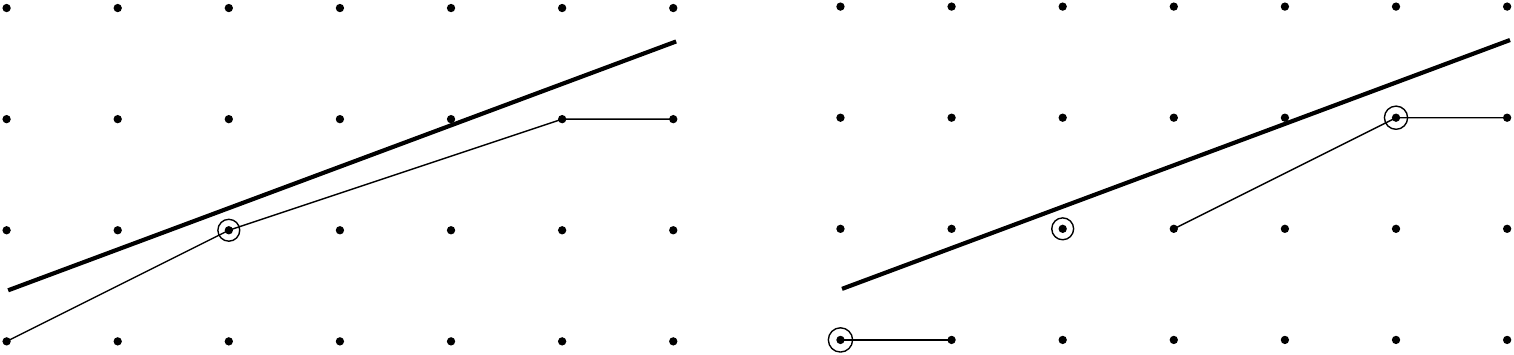}
\end{center}
\caption{Illustration of reduction to integer convex hull.}\label{fig:integerConvexHull}
\end{figure}
To see this observe that for $x\in\Nats$ we have vertical distance $(ax+b)/p - \floor{(ax+b)/p}$ between the line and the nearest integer point.
Using the equality
$$(ax+b)/p - \floor{(ax+b)/p} = ((ax+b) \text{ mod } p)/p$$
we see that minimizing $(ax+b \text{ mod } p)$ is equivalent to minimizing $(ax+b)/p - \floor{(ax+b)/p}$, as claimed.
Therefore it suffices to search for the point $(x,y)\in D = I \times \Nats$ below $\ell$ that is closest to $\ell$. 
Since $\ell$ is a line, the point $(x,y)$ must lie on the convex hull $CH(\ell)$ of the set of points  in $D$ that lie below $\ell$, referred to as the ``integer convex hull'', see Figure~\ref{fig:integerConvexHull}.
Clearly, the closest point will always be on the upper part of the hull, denoted $CH_L(\ell)$.
Zolotykh~\cite{z-nvilp-00} showed that $CH(\ell)$ consists of $O(\log (i_2-i_1))$ line segments.
To find a point on the integer convex hull with the smallest vertical distance to $\ell$ we will use a result by Charrier and Buzer~\cite{cb-arnrn-09}.

\begin{theorem} \label{thm:IntegerHull} (Charrier and Buzer~\cite{cb-arnrn-09})
  Given a line segment $\ell$, the upper integer convex hull $CH_L(\ell)$ can be computed in $O(\log(i_2-i_1))$ time, where $i_1$ and $i_2$ are the $x$-coordinates of the end points of~$\ell$.
\end{theorem}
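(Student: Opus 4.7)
The plan is to exploit the well-known connection between the upper integer convex hull $CH_L(\ell)$ and the continued fraction expansion (equivalently, the Stern--Brocot tree) of the slope $s = a/p$ of $\ell$. A classical fact is that every edge of the upper hull of the half-plane below $\ell$ has primitive direction vector $(q,r)$ where $r/q$ is a \emph{best rational approximation} to $s$ from below (or above, depending on orientation), and the sequence of edge directions along $CH_L(\ell)$ is exactly the sequence of convergents and semi-convergents of $s$ encountered on the Stern--Brocot path to $s$. This already gives Zolotykh's bound of $O(\log(i_2-i_1))$ on the number of edges.

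First I would locate the leftmost vertex $v_0$ of $CH_L(\ell)$: the integer point $(x_0,y_0)$ with $x_0$ the smallest integer $\ge i_1$ minimizing $((ax+b) \bmod p)$ subject to $x_0$ being a vertex of the hull. This is a Diophantine problem solvable in $O(\log p)$ time by the extended Euclidean algorithm. Symmetrically, locate the rightmost vertex $v_f$ near $x = i_2$. Then I would trace the hull from $v_0$ rightward by maintaining a pair $(u,w)$ of lattice vectors bracketing the current supporting slope at the current vertex $v$. At each step I take the mediant $u+w$, compare with $\ell$ to decide which side to refine (a single evaluation of $a \cdot (\text{mediant}_x) \bmod p$), and use an integer division to compute the largest multiplicity $k$ such that $v + k(u+w)$ is still below $\ell$ and has $x$-coordinate $\le i_2$; this determines the next hull vertex in $O(1)$ arithmetic operations. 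This is the Stern--Brocot / Euclidean refinement step.

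Correctness follows from the classical characterization above: the mediant test picks out precisely the next convergent or semi-convergent, and the greedy multiplicity step implements the quotient in a single Euclidean iteration. Running time is $O(\log(i_2-i_1))$: each Stern--Brocot descent step either refines to a new convergent (of which there are $O(\log(i_2-i_1))$ by the bound on the continued fraction length, since denominators grow at a Fibonacci-like rate up to $i_2-i_1$) or takes a multiplicity step whose cost is absorbed into the endpoint test. Adding the $O(\log p)$ cost for the initial and final vertex location, and using $p \le \mathrm{poly}(i_2-i_1)$ in the applications here (otherwise one can first clip to a multiple of the period $p/\gcd(a,p)$), the total is $O(\log(i_2-i_1))$.

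The main obstacle I expect is clean handling of the segment endpoints: the first and last edges of $CH_L(\ell)$ need not be full convergent edges, and deciding whether the leftmost Stern--Brocot direction at $v_0$ terminates because of the next convergent or because $x = i_2$ was reached requires care to avoid losing a logarithmic factor. A standard fix is to precompute $v_0$ and $v_f$ exactly (via extended Euclid) and then traverse the interior, where the regular Stern--Brocot structure applies; a short case analysis at each end produces the two clipped edges.
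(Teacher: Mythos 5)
The paper does not prove this theorem: it is cited directly from Charrier and Buzer, with the single remark that their assumption that $\ell$ passes through the origin can be dropped. Your sketch nonetheless captures the essence of what Charrier and Buzer actually do --- their algorithm is a Stern--Brocot descent in which each run of identical moves is collapsed by one integer division, and the running time is controlled by Zolotykh's $O(\log(i_2-i_1))$ bound on the number of hull vertices. So the body of your argument (the bracketing pair $(u,w)$, the mediant test, and the greedy multiplicity step, terminating once the bracketing denominator exceeds the interval width) is the right algorithm with the right analysis.

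Two details are off. First, locating the leftmost vertex $v_0$ is not a Diophantine problem requiring extended Euclid: the leftmost integer point of the region below $\ell$ over $x\in\{i_1,\dots,i_2\}$ is simply $(i_1,\lfloor(ai_1+b)/p\rfloor)$, which is automatically a vertex of the upper hull and is found in $O(1)$ time; likewise $v_f=(i_2,\lfloor(ai_2+b)/p\rfloor)$. Your description of $v_0$ as the smallest $x_0\ge i_1$ ``minimizing $(ax+b)\bmod p$'' confuses the starting vertex of the traversal with the eventual min-hash point, which lies somewhere along the hull, not necessarily at its left end. Second, since $v_0$ and $v_f$ are $O(1)$ computations, the $O(\log p)$ term you introduce and the patch ``$p\le\mathrm{poly}(i_2-i_1)$ in the applications here (otherwise clip to a multiple of the period)'' are both unnecessary and, as written, unjustified: in the paper's setting $p$ is a prime that may be far larger than $i_2-i_1$, and $\gcd(a,p)=1$ whenever $a>0$, so the period is $p$ and clipping buys nothing. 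The correct accounting is that no $O(\log p)$ phase arises anywhere: the descent does $O(1)$ word-RAM operations per hull edge (on $O(\log p)$-bit words, which the model provides), and there are $O(\log(i_2-i_1))$ edges.
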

Charrier and Buzer initially assume that $\ell$ passes through the origin.
However, they note (Section~7 in~\cite{cb-arnrn-09}) that this requirement is not needed.
Thus, using their result on the line $\ell$ defined in (\ref{eq:ell}) we obtain Lemma~\ref{lemma:minwise-interval}.

\medskip

We now discuss how to use Lemma~\ref{lemma:minwise-interval} to output the $k$ smallest hash values (and corresponding inputs, i.e., the bottom-$k$ sample) in time $O(k \log p)$.
First compute $CH_L(\ell)$ and find the point $(x_1,y_1)\in CH_L(\ell)$ with the smallest vertical distance to $\ell$. Next, split the problem into two subintervals; one for the part of $\ell$ in the $x$-interval $[i_1,x_1-1]$ and one for the part of $\ell$ in the $x$-interval $[x_1+1,i_2]$. Using a heap to find the integer point with smallest vertical distance in the intervals considered, we can repeat this process until $k$ points have been found.
To compute a consistent sample rather than the bottom-$k$ sample we simply stop the procedure whenever we see an element with a hash value larger than the threshold.

\begin{corollary} \label{cor:bottom-k}
  Let $h(x) = (ax+b) \mod p$, where $p$ is prime and $0\leq a,b < p$.
	Given $0\leq i_1 < i_2 < p$ consider $I=\{i_1,\dots,i_2\}$.
	It is possible to compute the bottom-$k$ sample (or the consistent sample of expected size $k$) from the interval $I$
	with respect to $h$ in (expected) time $O(k \log |I|)$.
\end{corollary}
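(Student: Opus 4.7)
The plan is to derive Corollary~\ref{cor:bottom-k} from Lemma~\ref{lemma:minwise-interval} by the standard ``split-at-the-minimum'' priority-queue scheme suggested in the paragraph above. Let $M(J) := \argmin_{x\in J} h(x)$, which by Lemma~\ref{lemma:minwise-interval} is computable in $O(\log|J|)$ time for any subinterval $J \subseteq I$. First I would initialize a min-heap keyed by $h$-value containing the single entry $(M(I), I)$. At each step I would extract the root $(x^*, J)$, output $x^*$ as the next sample, and for each of the (at most two) nonempty pieces $J_L = J \cap \{\min J, \dots, x^*-1\}$ and $J_R = J \cap \{x^*+1, \dots, \max J\}$ invoke Lemma~\ref{lemma:minwise-interval} and insert $(M(J_L), J_L)$ respectively $(M(J_R), J_R)$ into the heap. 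For the bottom-$k$ sample I would halt after $k$ extractions; for a consistent sample with threshold $\Delta$ I would halt as soon as the extracted hash value exceeds~$\Delta$.

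The correctness argument rests on the invariant that after $t$ extractions the intervals currently in the heap form a partition of $I$ minus the $t$ already-output points, and each heap entry carries the min-hash element of its block. Hence the next element to be extracted always has the smallest hash value among the not-yet-output elements of $I$, so the algorithm enumerates $I$ in nondecreasing order of $h$. This is exactly what both variants require.

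For the running time, each of the $k$ iterations performs $O(1)$ heap operations on a heap of size at most $k+1$, at cost $O(\log k) \subseteq O(\log|I|)$, together with at most two applications of Lemma~\ref{lemma:minwise-interval}, each costing $O(\log|I|)$. Summing over $k$ iterations gives $O(k\log|I|)$. For the consistent-sampling variant with threshold $\Delta$ chosen so that the expected sample size is $k$ (i.e., $\Delta = kp/|I|$), the expected number of iterations is $k$ by pairwise independence of $\H_1$, since each $x\in I$ is sampled with probability $\Delta/p$; this yields expected time $O(k\log|I|)$.

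The main subtlety I would be careful about is the $\argmin$ tie-breaking rule, but it causes no problem here: once $x^*$ is removed, the subintervals $J_L$ and $J_R$ are disjoint and do not overlap at $x^*$, so there is no ambiguity in which subinterval contains any remaining element, and the partition invariant is preserved cleanly.
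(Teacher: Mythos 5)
Your proposal is correct and follows essentially the same route as the paper: the paper also splits the interval at the min-hash point into the two subintervals $[i_1,x_1-1]$ and $[x_1+1,i_2]$, maintains a heap of (min-hash, subinterval) pairs, extracts repeatedly, and stops after $k$ extractions (or upon seeing a hash value above the threshold for the consistent sample). Your write-up merely spells out the partition invariant, the heap size and cost, and the tie-breaking subtlety more explicitly than the paper's terse description.
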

It is an interesting problem whether it is possible to improve this bound to $O(k + \log |I|)$.


\section{Rectangle-efficient consistent sampling} \label{sec:SubsamplingGrids}

The aim and structure of this section are similar to those of Section~\ref{sec:SubsamplingIntervals}, but now addressing the case where we want to do hashing-based sampling in a rectangle $I=\{i_1\ldots, i_2\}\times\{j_1\ldots, j_2\}$.
Specifically, we prove Lemma~\ref{lemma:subsampling-rectangle} and show how one can use it to perform consistent sampling.
This will be used in Section~\ref{sec:transform} to prove Theorem~\ref{thm:lsh-polygon} and in Section~\ref{sec:polygons} to construct an efficient data structure for estimating the size of intersections and unions of polygons.
Assume without loss of generality that $0 \leq i_1<i_2<p$, $0 \leq j_1<j_2<p$ and $i_2-i_1\leq j_2-j_1$.
Consider the 2-independent family $\H_2=\{(x,y) \mapsto (ax+by+c) \mod p \;|\; a,b,c \in U\}$ and choose $h\in \H_2$.
To prove Lemma~\ref{lemma:subsampling-rectangle} we have to argue that
\begin{equation}\label{eq:sample}
I'=\{(x,y)\in I \;|\; h(x,y)=0\}
\end{equation}
can be computed in time $O((|I'|+1) \log (i_2-i_1))$.
Similar to the previous section we will show how the problem can be reduced to the problem of finding all integer points below a line segment $\ell$ with a small vertical distance to $\ell$.

To find all $(x,y) \in I$ for which $h(x,y) = (ax+by+c) \mod p = 0$, as a first step we translate the function $h$ such that we can consider input $y \in [0,y'_2]$.
Specifically, we replace $h$ with $h': (x,y) \mapsto (ax+by+c') \mod p$, where $c'=c + b j_1$, and consider inputs with $(x,y)\in [i_1,i_2] \times [0,j'_2]$, $j'_2=j_2-j_1$.
This is equivalent to the original task since $h(x,y) = h'(x,y-j_1)$.
Next note that for $x\in [i_1,i_2]$ and $y\in [0,j'_2]$:
$$(ax+by+c') \mod p = 0 \quad \Leftrightarrow \quad y \equiv (-b^{-1} ax -b^{-1}c') \mod p, $$
To simplify the expression set $q=-b^{-1}a$ and $s=-b^{-1}c'$.
Then we have a zero hash value when $y=(qx+s) \mod p = (qx+s)-kp$ for some positive integer $k$.
Dividing by $p$ and substituting $\alpha=q/p$ and $\beta=s/p$ we get
      $\frac {y}{p} =\alpha x + \beta-k,$
where $x\in [i_1,i_2]$ and $y/p \in [0,j'_2/p]$.
Now we can express the original problem as finding all $(x,k)\in [i_1,i_2]\times\Nats$ such that $\alpha x+\beta -k \in [0,j'_2/p]$. Consider the line segment
 $\ell' = \{ (x,\alpha x+\beta) \; | \; x\in [i_1;i_2] \}$.
An integer point $(x',y')$ below $\ell'$ with $x'\in [i_1,i_2]$ and vertical distance at most $j'_2/p$ to $\ell'$ corresponds to a point $(x',y')$ such that $h'(x',y')=0$.

To find all the points $(x',y')$ that fulfill the restrictions we can apply the same technique as in Section~\ref{sec:SubsamplingIntervals}.
That is, compute the integer convex hull $CH_L(\ell')$ using the algorithm by Charrier and Buzer~\cite{cb-arnrn-09}.
One difference from the setting of Section~\ref{sec:SubsamplingIntervals} is that we are interested in \emph{all} integer points close to $\ell'$, but $CH_L(\ell')$ is guaranteed only to include one such point if it exists.
This is handled by recursing on subintervals in which no points have been reported until we find an interval where the integer convex hull does not contain a point close to $\ell'$.
Recall that the time to output the integer convex hull is $O(\log(i_2-i_1))$ by the result of Zolotykh~\cite{z-nvilp-00}, so the cost per point reported is logarithmic.
This concludes the proof of Lemma~\ref{lemma:subsampling-rectangle}.

\subsection{Concentration bound} \label{ssec:concentration}

\begin{definition}
An ($\eps, \delta$)-estimator for a quantity $\mu$ is a randomized procedure that, given parameters $0 < \eps < 1$ and $0 < \delta < 1$, computes an estimate $X$ of $\mu$ such that $\Pr[|X-\mu| > \eps \mu] < \delta$.
\end{definition}

For some $\alpha > 0$ consider an arbitrary set $S\subseteq I$, and the sample $S' = S\cap I'$ where $I'$ is defined in (\ref{eq:sample}).
Let $1/p$ be the sampling probability. We now show that $p\, |S'|$ is concentrated around its expectation $|S|$ when $p$ is not too large.

\begin{lemma}\label{lem:concentration}
	For $\eps>0$, $p\, |S'|$ is an $(\eps, p/(\eps^2 \mu))$-estimator for $\mu=|S|$.
\end{lemma}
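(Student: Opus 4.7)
The plan is to apply a straightforward second-moment argument (Chebyshev's inequality), leveraging the 2-independence of $\H_2$ to control the variance of $|S'|$.

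First I would introduce, for each point $(x,y)\in S$, the indicator random variable $X_{(x,y)} = [h(x,y)=0]$, so that $|S'|=\sum_{(x,y)\in S} X_{(x,y)}$. Because $h\in\H_2$ is uniformly random over the 2-independent family, for any single point $\Pr[h(x,y)=0]=1/p$, giving $\E[X_{(x,y)}]=1/p$ and $\Var[X_{(x,y)}]=(1/p)(1-1/p)\le 1/p$. Hence $\E[p\,|S'|]=\mu$, establishing that $p\,|S'|$ is an unbiased estimator for $\mu=|S|$.

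Next I would bound the variance. The key step is that 2-independence of $\H_2$ implies pairwise independence of the events $\{h(x,y)=0\}$ for distinct points of $S$, so the covariances between the $X_{(x,y)}$ vanish. Therefore
\[
\Var[p\,|S'|] = p^2 \sum_{(x,y)\in S} \Var[X_{(x,y)}] \le p^2 \cdot \frac{|S|}{p} = p\,\mu.
\]

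Finally, Chebyshev's inequality yields
\[
\Pr\bigl[\,|p\,|S'|-\mu|>\eps\mu\,\bigr] \le \frac{\Var[p\,|S'|]}{(\eps\mu)^2} \le \frac{p\,\mu}{\eps^2\mu^2} = \frac{p}{\eps^2\mu},
\]
which is exactly the desired $(\eps,p/(\eps^2\mu))$-estimator guarantee.

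There is no real obstacle here; the one subtle point is simply observing that 2-independence of the hash family over pairs of distinct inputs is enough to make the indicator variables pairwise independent (not merely uncorrelated by coincidence), which is what kills the covariance terms in the variance expansion. Everything else is a direct calculation.
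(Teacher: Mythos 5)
Your proof is correct and is essentially the same as the paper's: both set up indicator variables for membership in the sample, use 2-independence to bound the variance by the mean (so $\Var[p|S'|]\le p\mu$), and finish with Chebyshev's inequality. The only cosmetic difference is that you bound each $\Var[X_q]\le 1/p$ and sum, while the paper bounds $\Var[X]\le\E[X]$ directly, which gives the same result.
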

\begin{proof}
	The proof is a standard application of the second moment bound for 2-independent indicator variables.
	For each point $q\in S$ let $X_q$ be the indicator variable that equals $1$ if $q\in I'$ and 0 otherwise.
	Clearly we have $|S'|=X$ where $X=\sum_{q\in S} X_i$, so
	$\E[p X] = p \sum_{q\in S} \E[X_i] = \mu$.
	By definition of $I'$ the variables are 2-independent, and so $\Var(p X) = p^2 \Var[X] \leq p^2 \E[X] = p \mu$.
	Now Chebyshev's inequality implies
	$\Pr[|p X - \mu| > \eps \mu] < \Var(pX)/(\eps \mu)^2 \leq p/(\eps^2 \mu)$.
\end{proof}

To get an $(\eps, \delta)$-estimator we thus need $p \leq \delta \eps^2 \mu$.
The expected time for computing $I'$ in Lemma~\ref{lemma:subsampling-rectangle} is upper bounded by $O(\E[|I'|+1] \log p)$ which is $O((|I|/p+1) \log p)$.
If we choose $p = \Omega(\delta \eps^2 |S|)$, to get an $(\eps, \delta)$-estimator, and let $\alpha = |S|/|I|$ be the fraction of points of $I$ that are also in $S$, then the expected time simplifies to $O(\log(p)/(\alpha\delta\eps^2))$.
That is, the bound independent of the size of $S$, has logarithmic dependence on $p$, and linear dependence on $1/\alpha$, $1/\delta$, and $1/\eps$.


\section{From consistent sampling to locality-sensitive hashing}\label{sec:transform}

We now present a general transformation of methods for 2-independent consistent sampling to locality-sensitive hashing for Jaccard similarity.
Together with the consistent sampling methods in Sections~\ref{sec:SubsamplingIntervals} and~\ref{sec:SubsamplingGrids} this will yield Theorems~\ref{thm:minwise-weighted} and~\ref{thm:lsh-polygon}.

Thorup~\cite{thorup2013bottom} observed that min-wise hashing based on a 2-independent family does not 
give collision probability that is close to (or a function of) Jaccard similarity.
He observes a bias for a 2-independent family of hash functions based on multiplication, similar to the ones used in this paper.
Thus we take a different route: First produce a consistent sample using 2-independence, and then apply min-wise hashing \emph{to the sample} using a stronger hash function.
The expected time per element is constant if we make sure that the sample has expected constant size.

Let constants $\eps > 0$ and $\alpha>0$ be given.
For a point set $S \subseteq I$ with $|S|\geq \alpha |I|$ we produce a 2-independent sample $I'\cap S$ with sampling probability $1/p^*$, where $p^*=\Theta(\eps^3 \alpha |I|)$ is a prime number.
This is possible assuming $|I|>1/(\eps^3 \alpha)$ because there exists a prime $p_i$ in every interval $\{2^i,\dots,2^{i+1}-1\}$, $i=1,2,3,\ldots$.
Now select $f$ at random from an $\eps/4$-minwise independent family and define the hash value
\begin{equation}\label{eq:H2}
H^*(S) = \argmin_{x\in I'\cap S} f(x) \enspace .
\end{equation}

\begin{lemma}\label{lemma:transform}
	For $S,T \subseteq I$ with $|S|,|T|\geq \alpha |I|$ and $|I|>12\,p/(\eps^3 \alpha)$ we have $\Pr[H^*(S)=H^*(T)] = \frac{|S\cap T|}{|S\cup T|} \pm \eps$, where the probability is over the choice of $I'$ and $f$.
\end{lemma}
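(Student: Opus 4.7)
\medskip

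\noindent\textbf{Proof plan.} The plan is to analyze the randomness in two layers: first condition on the 2-independent sample $I'$ and use the min-wise independence of $f$, then remove the conditioning using a second-moment concentration bound on $I'$.

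\emph{Step 1 (inner randomness, over $f$).} Fix $I'$ and set $S' = I'\cap S$, $T' = I'\cap T$. Assume first that $S'\cup T'\neq\emptyset$. Since $f$ is drawn from an $\eps/4$-minwise independent family, summing the probability that $f(y)$ achieves the minimum over $S'\cup T'$ across all $y\in S'\cap T'$ gives
\[
\Pr_f[H^*(S)=H^*(T)\mid I'] \;=\; (1\pm\eps/4)\,\frac{|S'\cap T'|}{|S'\cup T'|}.
\]

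\emph{Step 2 (outer randomness, over $I'$).} I want to show that the sample ratio $|S'\cap T'|/|S'\cup T'|$ is $J\pm O(\eps)$ with probability $1-O(\eps)$, where $J=|S\cap T|/|S\cup T|$. Apply Lemma~\ref{lem:concentration} twice, with $\eps'=\Theta(\eps)$ chosen small enough below. Since $|S\cup T|\geq\alpha|I|$ and $p^*=\Theta(\eps^3\alpha|I|)$, the expectation $\E[|S'\cup T'|]=|S\cup T|/p^*=\Theta(1/\eps^3)$, and the hypothesis on $|I|$ makes Chebyshev strong enough to give, with probability $1-O(\eps)$,
\[
|S'\cup T'| = (1\pm\eps')\,\frac{|S\cup T|}{p^*},\qquad \bigl|\,p^*|S'\cap T'|-|S\cap T|\,\bigr|\leq \eps'\,|S\cup T|.
\]
The key subtlety (and the main obstacle) is the second bound: a relative multiplicative guarantee on $|S'\cap T'|$ fails when $|S\cap T|$ is tiny, but an absolute bound on the scale of the denominator $|S\cup T|/p^*$ is exactly what the second-moment estimate $\Var[p^*|S'\cap T'|]\leq p^*|S\cap T|\leq p^*|S\cup T|$ delivers through Chebyshev. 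Taking ratios yields $|S'\cap T'|/|S'\cup T'| = J \pm O(\eps')$.

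\emph{Step 3 (combine and handle exceptional events).} On the good event for $I'$, substituting into the conclusion of Step~1 gives
\[
\Pr_f[H^*(S)=H^*(T)\mid I'] \;=\; (1\pm\eps/4)\bigl(J\pm O(\eps')\bigr) \;=\; J\pm O(\eps),
\]
using $J\leq 1$. The exceptional event (bad concentration, or $S'=\emptyset$, or $T'=\emptyset$) has probability $O(\eps)$; the empty-sample contribution is similarly bounded because $\E[|S'|],\E[|T'|]=\Theta(1/\eps^3)$ and Chebyshev gives $\Pr[|S'|=0]=O(\eps^3)$. Adding this probability into the overall bound contributes at most $O(\eps)$, so tuning the implicit constants in $\eps'$ small enough yields the stated additive error $\eps$.
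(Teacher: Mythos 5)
Your proof is correct and follows the same two-layer decomposition as the paper's (condition on $I'$ and invoke $\eps/4$-minwise independence of $f$, then remove the conditioning via 2-independent second-moment concentration). The only real difference is in Step~2: the paper writes $|S'\cap T'| = |S'|+|T'|-|S'\cup T'|$ and applies Lemma~\ref{lem:concentration} three times (to $|S'|$, $|T'|$, and $|S'\cup T'|$, each of which has base set of size at least $\alpha|I|$ so the multiplicative guarantee applies directly), whereas you apply it once to $|S'\cup T'|$ and then run the Chebyshev/variance computation directly on $|S'\cap T'|$ to get an \emph{additive} error on the scale of $|S\cup T|$, correctly noticing that a multiplicative bound would not survive $|S\cap T|$ small. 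Both devices land in the same place, $J' = J \pm O(\eps')$; yours uses two concentration events instead of three but needs the variance argument inlined rather than cited. Your remark about $S'$ or $T'$ being empty is a detail the paper leaves implicit (subsumed by the estimator bounds). Overall this is the same argument with a slightly different way of controlling the numerator.
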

\begin{proof}
	Consider the Jaccard similarity of samples $S' = S\cap I'$ and $T' = T\cap I'$:
	$$J' = \frac{|S'\cap T'|}{|S'\cup T'|} = \frac{|S'|+|T'|-|S'\cup T'|}{|S'\cup T'|} \enspace .$$
	Conditioned on a fixed $I'$, the collision probability of $H^*(S)$ is $J'\pm \eps/4$ by the choice of $f$.
	Thus it suffices to show that $J'$
	differs from $J$ by at most $\eps/2$ with probability at least $1-\eps/4$

	By Lemma~\ref{lem:concentration}, $p \cdot |S'\cup T'|$ is an
	$(\eps/8, \eps/12)$-estimator for $|S\cup T|$ since $|S\cup T| \geq \alpha |I|$.
	Similarly, $p \cdot |S'|$ is an $(\eps/8, \eps/12)$-estimator for $|S|$ and $p \cdot |T'|$ is an $(\eps/8, \eps/12)$-estimator for $|T|$.
	The probability that all estimators are good is at least $1-\eps/4$, and in that case
	$$J - \varepsilon/2 < \frac{|S\cap T|-(3\eps/8) |S\cup T|}{|S\cup T|+(\eps/8)|S\cup T|} \leq J' \leq \frac{|S\cap T|+(3\eps/8) |S\cup T|}{|S\cup T|-(\eps/8)|S\cup T|} < J + \varepsilon/2 $$
	as desired.
\end{proof}

We have not specified 
$f$.
The most obvious choice is to use an $O(\log(1/\eps))$-independent hash function~\cite{MR2002c:68104}.
Another appealing choice is twisted tabulation hashing~\cite{dahlgaard2014approximately} that yields constant evaluation time, independent of $\eps$. The expected size of $S \cap I'$ is bounded by a function of $\eps$ and~$\alpha$. This means that we can combine twisted tabulation with an injective universe reduction step to reduce the domain of twisted tabulation to a (large) constant depending on $\eps$ and $\alpha$.

\subparagraph*{Proof of Theorem~\ref{thm:minwise-weighted}.}
Consider a vector ${\bf x} = (x_1,\dots,x_n) \in U^n$.
We follow the folklore approach~\cite{haeupler2014consistent} of conceptually mapping each vector ${\bf x}$ to a set $P_{\bf x}$, such that the Jaccard similarity of $P_{\bf x}$ and $P_{\bf y}$ exactly equals the weighted Jaccard similarity of ${\bf x}$ and ${\bf y}$.
In particular, it is easy to verify that this is the case if we let
$P_{\bf x} = \{ (i,j) \; | \; i=1,\dots,n ; \; j=1,\dots,x_i \}$. 
Note that $P_{\bf x}$ and $P_{\bf y}$ both have size $N$.
We will use the following class of hash functions from $U\times U$ to $U$:
\begin{equation}\label{eq:h2}
 H_2 = \{  (x,y) \mapsto (ax+by+c) \text{ mod } p \;|\; a,b,c \in U \} \enspace .
\end{equation}

The 2-independence of $H_2$ follows from the arguments of Carter and Wegman~\cite{carter1977universal}. 
\full{A proof is included in Appendix~\ref{app:GridPairwiseIndependence} for completeness.}
\conf{A proof can be found in the full version of this paper~\cite{geomLSHarxiv}.}
When restricted to points of the form $(i,\cdot)$ for a fixed~$i$, each function $h\in\H_2$ has a form suitable for Corollary~\ref{cor:bottom-k} in  Section~\ref{sec:SubsamplingIntervals}.
This means we can find the minimum for $P_{\bf x}$ restricted to a given column $i$ in time $O(\log x_i)$.
Using a heap to keep track of the smallest hash value from each column of $P_{\bf x}$ not (yet) reported in the sample, we can output all elements of $P_{\bf x}$ with a hash value smaller than any given threshold $\tau$ in time $O(\log p)$ per element.
The threshold $\tau$ is chosen to match the desired sampling probability $p^*$.

Lemma~\ref{lemma:transform} then says that we get the desired collision probability up to an additive error of~$\eps$.
The expected time to hash is $O(n \log p)$ (to populate the priority queue) plus $O(\log p)$ times the expected number of samples.
The expected number of samples $|S|/p$ is constant for every constant $\eps > 0$, which gives the desired time bound in expectation.

It is possible to turn the expected bound into a worst case bound by stopping the computation if the running time exceeds $1/\delta$ times the expectation, which happens with probability at most~$\delta$.
If we simply output a constant in this case the collision probability changes by at most~$\delta$ (which we can compensate for by decreasing $\eps$).
\hfill $\qed$

\subparagraph*{Proof of Theorem~\ref{thm:lsh-polygon}.}
The proof is similar to the proof of Theorem~\ref{thm:minwise-weighted} but with some added geometric observations.
Let $P_1$ and $P_2$ be two polygons contained in $I$.
 As mentioned in the introduction, a \emph{valid answer} to the Jaccard similarity of  polygons $P_1$ and $P_2$ with respect to $\phi$ is any value $\frac{X_{\cap}}{X_{\cup}}$ such that $A(P_1^-(w_1)\cap P_2^-(w_2)) \leq X_{\cap} \leq A(P_1^+(w_1)\cap P_2^+(w_2))$ and $A(P_1^-(w_1)\cap P_2^-(w_2)) \leq X_{\cup} \leq A(P_1^+(w_1)\cap P_2^+(w_2))$, where $w_i=\phi \cdot d(P_i)$ for $i\in \{1,2\}$.

We now switch to considering the restrictions of $P_1^+(w_1/2)$ and $P_2^+(w_1/2)$ to a $p$-by-$p$ grid of points whose enclosing rectangle contains $I$. See~\cite{h-ssr-13} for a survey on snapping points to a grid.

The grid points are identified in the natural way with integer coordinates in $[p]\times [p]$.
We choose $p$ such that the number of points inside $I$ is $\Theta(p/\alpha)$ times the desired number of samples required for Lemma~\ref{lemma:transform} to hold.

Let $L^+=[i_1,i_2]\times [j_1,j_2]$ be the minimum bounding box of $I\cap P^+_1(w_1/2)$ and $I \cap P^+_2(w_2/2)$.
The consistent sampling will be made on $P^+_i(w_i/2)$, $i\in \{1,2\}$.
The reason for this is that
$$|P_1^+(w_1/2) \cap P_2^+(w_2/2)|/|P_1^+(w_1/2) \cup P_2^+(w_2/2)|$$
is a valid answer to the Jaccard similarity of $P_1$ and $P_2$ in the fuzzy model with respect to $\phi$, which follows immediately from the below two inequalities that are proven in Lemma~\ref{lem:ApxArea} (Section~\ref{sec:polygons}):
\begin{alignat*}{1}
        &A(P_1\cup P_2) \leq |(P_1^+(w_1/2) \cup P_2^+(w_2/2)) \cap I| \leq A(P_1^+(w_1/2) \cup P_2^+(w_2/2)), \text{ and}\\
        &A(P_1\cap P_2) \leq |(P_1^+(w_1/2) \cap P_2^+(w_2/2))\cap I| \leq A(P_1^+(w_1/2) \cap P_2^+(w_2/2)) \enspace .
\end{alignat*}

Lemma~\ref{lemma:transform} gives us the desired collision probability up to an additive error of $\eps$.
The expected time to hash is $O(\log p)$ plus $O(t \log p)$ times the expected number of samples, where $t$ is the time to test if a given grid point lies inside a polygon.
If we assume that $P_1$ and $P_2$ are $\alpha$-dense in~$I$, that is, there exists an $\alpha>0$ such that  $|P^+_1(w_1/2)|,|P^+_2(w_2/2)|>\alpha \cdot |L^+|$, then the expected number of samples is $|L^+|/(\alpha p)$ for any constants $\eps$ and $\phi$, which gives the desired time bound in expectation. In many natural settings $\alpha$ is a constant, which implies that the expected number of samples is also constant.


\section{Estimating union and intersection of polygons} \label{sec:polygons}

In this section we consider the question: Given a set $\P=\{P_1, \ldots ,P_n\}$ of $n$ preprocessed polygons in the plane, how efficiently can we compute the area of the union or the intersection of a given subset $\Q\subseteq \P$?
In contrast to elementary approaches based on global, fully random sampling, our solution allows polygons to be independently preprocessed based on a small amount of shared randomness that specifies a \emph{pseudorandom} sample.

Computing the area of the union of a set of geometric objects is a well-studied problem in computational geometry. One example is the Klee’s Measure Problem (KMP). Given $n$
axis-parallel boxes in the $d$-dimensional space, the problem asks for the measure of their union. In 1977, Victor Klee~\cite{k-cmcl-77} showed that it can be
solved in $O(n \log n)$ time for $d=1$. This was generalized to $d > 1$ dimensions by Bentley~\cite{b-akrp-77} in the same year, and later improved by van Leeuwen and
Wood~\cite{lw-mprr-81}, Overmars and Yap~\cite{oy-nubkm-91} and, Chan~\cite{c-kmpme-13}. In 2010, Bringmann and Friedrich~\cite{bf-avuih-10}  gave an
$O(\frac{dn}{\epsilon^2})$ Monte Carlo $(1+\epsilon)$-approximation algorithm for the problem. 

A related question is the computation of the area of the intersection of $n$ polygons in $d$-dimensional space.
Bringmann and Friedrich~\cite{bf-avuih-10} showed that there cannot be a (deterministic or randomized) multiplicative
$(2^{d^{1-\epsilon}})$-approximation algorithm in general, unless NP$=$BPP. They therefore gave an additive $\epsilon$-approximation
for a large class of geometric bodies, with a running time of $O(\frac{nd}{\epsilon^2})$ assuming that the following three queries
can be approximately answered efficiently: point inside body, volume of body and sample point within a body.

In this section we will approach the problem slightly differently. The approach we suggest is to produce a small summary of the set $\P$,
such that given any subset $\Q$ of $\P$ the union and intersection of $\Q$ can be estimated efficiently. Unfortunately, the lower bound
arguments by Bringmann and Friedrich~\cite{bf-avuih-10} defeat any reasonable hope of achieving polynomial running time for arbitrary polygons. To get around
the lower bounds we again adopt the approximation model proposed by Arya and Mount~\cite{am-ars-00} (stated in Section~1.1)
\full{, which has been used extensively in
the literature~\cite{csx-gadbd-05,d-sasq-07,fm-strso-10}}%
.

Similar to the approach by Bringmann and Friedrich~\cite{bf-avuih-10} we will also use sampling of the polygons to estimate the size of
the union and intersection. However, compared to earlier attempts, the main advantage of our approach is that we generate the sample
points (a summary of the input) in a preprocessing step and after that we may discard the polygons. Union and intersection queries are
answered using only the summary. Also, we do not impose any restrictions on the input polygons. The drawbacks are that we only consider
the case when $d=2$ and the approximation model~\cite{am-ars-00} we use is somewhat more ``forgiving''  than previously used models.

For each polygon $P_i$ in $\P$, $1\leq i\leq n$, let $w_i=\phi\cdot d(P_i)$, where $d(P_i)$ is the diameter of $P_i$ and $0\leq \phi \leq 1$ is a given constant. Let $\Q$ be the input to a union or intersection query, that is, $\Q$ is a subset of $\P$. To simplify the notations we will write $\cup \Q^+(w) = \cup_{P_i \in \Q} P_i(w_i)$ and $\cup \Q^-(w) = \cup_{P_i \in \Q} P_i(w_i)$. Define $\cap \Q^+(w)$ and $\cap \Q^-(w)$ symmetrically.

Following the above discussion, given a legal answer to a set intersection query $X=\cap \Q$ is any~$X'$ such that
 $\cap \Q^-(w) \subseteq X' \subseteq \cap \Q^+(w)$
and for a union query $X=\cup \Q$ a legal answer is any~$X'$ such that
 $\cup \Q^-(w) \subseteq X' \subseteq \cup \Q^+(w).$
It is immediate from the above definitions that for any polygon $P$ and any $w\geq \sqrt 2$ we have:
$P^-(w) \subset P  \subset P^+(w)$.
We will use the number of integer coordinates, denoted $|P|$, within a polygon $P$ to estimate the area of the polygon, denoted $A(P)$. Proofs of Lemmas~\ref{lem:area2points},~\ref{lem:ApxArea} and~\ref{lem:IntersectionBound} can be found
\full{in the appendix.}
\conf{in the full version of this paper~\cite{geomLSHarxiv}.}

\begin{lemma} \label{lem:area2points}
 For a polygon $P$ having integer coordinates we have $A(P) \leq |P|$.
\end{lemma}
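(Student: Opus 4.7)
The plan is to prove this via Pick's theorem, which relates the area of an integer-vertex polygon to its lattice-point count. Since $P$ has integer coordinates, the theorem applies directly: if $I$ denotes the number of integer points strictly interior to $P$ and $B$ the number of integer points on the boundary of $P$, then
\[
A(P) = I + \tfrac{B}{2} - 1.
\]

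First I would fix the convention that $|P|$ counts all integer points in the closed region $P$, so $|P| \ge I + B$. (If the paper's $|P|$ means only interior points, the same argument still goes through after observing that a polygon with integer vertices has $B \ge 3$, since its vertices themselves lie on the boundary, and one verifies the bound using both $I$ and a portion of $B$.) Then combining with Pick's formula gives
\[
|P| - A(P) \;\ge\; (I+B) - \bigl(I + \tfrac{B}{2} - 1\bigr) \;=\; \tfrac{B}{2} + 1 \;>\; 0,
\]
which is the desired inequality.

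The step I expect to require the most care is the bookkeeping of exactly which integer points are counted in $|P|$. Once that convention is pinned down (matching how $|P|$ is used elsewhere in the paper, e.g.\ in the proof of Theorem~\ref{thm:lsh-polygon} where it plays the role of a grid-point count inside the polygon), the calculation reduces to a single application of Pick's theorem and no further computation is needed. No obstacle beyond this accounting issue is anticipated, since Pick's theorem handles arbitrary simple polygons with integer vertices.
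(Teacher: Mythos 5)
Your proof is correct and follows essentially the same route as the paper's: apply Pick's theorem and note that $|P| = i(P) + b(P)$ counts both interior and boundary lattice points, so $A(P) = i(P) + b(P)/2 - 1 < |P|$. The only cosmetic difference is that the paper cites the slightly more general form $A(P) = i(P) + b(P)/2 - (h+1)$ allowing $h$ holes, which only strengthens the inequality.
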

To make the queries more efficient we will not estimate the number of integer coordinates in the intersection/union $X$ of a query, instead we will estimate an approximation of $|X|$. 
We show:

\begin{lemma} \label{lem:ApxArea}
     For any polygon $P$ and $w\geq \sqrt 8$:
        $A(P) \leq A(P^+(w/2)) \leq |P^+(w/2)| \leq A(P^+(w))$.
\end{lemma}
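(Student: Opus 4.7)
The plan is to handle the chain as three separate inequalities. The leftmost $A(P) \le A(P^+(w/2))$ is immediate from the containment $P \subseteq P^+(w/2)$ and monotonicity of area. For the rightmost $|P^+(w/2)| \le A(P^+(w))$, I would associate to each lattice point $(i,j)\in P^+(w/2)$ the centered unit square $V_{i,j} = [i - 1/2, i + 1/2]\times[j - 1/2, j + 1/2]$. These squares are pairwise disjoint and each has area $1$, so their total area equals $|P^+(w/2)|$. The inclusion $V_{i,j}\subseteq P^+(w)$ follows by the triangle inequality: choosing a witness $p\in\mathrm{int}(P)$ with $|(i,j)-p|\le w/2$, any $q\in V_{i,j}$ satisfies $|q-p|\le w/2 + \sqrt{2}/2 \le w$, where the last step uses $w\ge\sqrt{8}$ (in fact $w\ge\sqrt{2}$ already suffices for this step). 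Summing over the disjoint cells then gives the bound.

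The substantive step is the middle inequality $A(P^+(w/2)) \le |P^+(w/2)|$. My plan is to set $R = \bigcup\{V_{i,j} : (i,j) \in P^+(w/2)\}$ and chain $A(P^+(w/2)) \le A(R) = |P^+(w/2)|$, where the equality uses the same disjoint-packing observation. The crux is thus to establish the reverse containment $P^+(w/2) \subseteq R$, which amounts to the statement that for every $q\in P^+(w/2)$, the nearest lattice point $(i,j)$ (at distance at most $\sqrt{2}/2$) itself lies in $P^+(w/2)$.

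This promotion from a point $q$ to its nearest lattice point $(i,j)$ is the hard part, and it is where the full hypothesis $w\ge\sqrt{8}$ is consumed. The naive triangle estimate only yields $|(i,j)-p|\le w/2 + \sqrt{2}/2$ for the witness $p$ of $q$, which is a priori too weak to certify $(i,j) \in P^+(w/2)$; indeed, near a rounded convex corner of $P^+(w/2)$ one can construct cells that meet the expanded polygon without their centre being inside it. The threshold $w\ge\sqrt{8}$ is precisely what is needed to rule out these pathological configurations, and I expect the proof to hinge on a careful geometric argument exploiting the fatness that $P^+(w/2)$ contains a ball of radius $w/2\ge\sqrt 2$ around every interior witness, from which a replacement witness $p'\in\mathrm{int}(P)$ at distance at most $w/2$ from $(i,j)$ can be produced.
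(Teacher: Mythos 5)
Your treatment of the outermost inequalities matches the paper. $A(P)\le A(P^+(w/2))$ is just containment and monotonicity of area. For $|P^+(w/2)|\le A(P^+(w))$ the paper uses exactly your disjoint-unit-cell packing argument (``any point within distance $\sqrt2$ from an integer coordinate in $P^+(w/2)$ lies in $P^+(w)$''), and you are right that $w\ge\sqrt2$ already suffices for this step.

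The middle inequality $A(P^+(w/2))\le|P^+(w/2)|$ is where you diverge from the paper and where your proposal has a genuine gap. The paper obtains it by appealing to Lemma~\ref{lem:area2points}, a Pick's-theorem bound $A(Q)\le|Q|$; it does not argue via a containment of $P^+(w/2)$ in a union of lattice cells. You instead try to show $P^+(w/2)\subseteq R$, where $R$ is the union of unit cells around lattice points of $P^+(w/2)$, and you conjecture that $w\ge\sqrt8$ rules out the rounded-corner failure mode that you yourself identify. It does not. Take $P=[0,\sqrt8]\times[0.49,0.51]$ with $\phi=1$, so $w=\phi\,d(P)=\sqrt8$ and $w/2=\sqrt2$. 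The point $q=(1,1.9)$ lies in $P^+(w/2)$, since its distance to $P$ is $1.9-0.51=1.39<\sqrt2$; but its nearest lattice point $(1,2)$ is at distance $1.49>\sqrt2$ from $P$, hence $(1,2)\notin P^+(w/2)$ and $q\notin R$. The fatness of $P^+(w/2)$ does not rescue this: the offending cell centre sits on the far side of $q$, away from $P$, outside every witness ball. So the containment you want is simply false, and the middle step of your chain cannot close this way.

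A further caution: for the same $P$ a Steiner-formula computation gives $A(P^+(w/2))\approx14.4$ while one can count $|P^+(w/2)|=12$, so the middle inequality itself is delicate and depends on more structure than either your cell-cover argument or the one-line appeal to Lemma~\ref{lem:area2points} (which is stated only for lattice polygons, whereas $P^+(w/2)$ has non-integer, rounded boundary) makes explicit. In any case the paper's intended mechanism is the Pick-style counting bound, not the containment $P^+(w/2)\subseteq R$ you pursue, and that is the step you would need to rethink.
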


As an immediate consequence of Lemma~\ref{lem:ApxArea} we can use the consistent samples in $P_i^+(w_i/2)$, $1 \leq i \leq n$, for our estimates of the intersection and union, provided that $w_i\geq \sqrt 8$. It remains to show how a summary of $\P$ can be computed and how the summary can be used to answer union and intersection queries.

\subparagraph*{Constructing a summary.}
For a given query $\Q$ containing $k\leq n$ polygons, let $P_{\min}=\argmin_{P_i\in \Q} |P_i^+(w_i/2)|$, $P_{\max}=\argmax_{P_i \in \Q} |P_i^+(w_i/2)|$ and let $d_{\min}=\argmin_{P_i\in \Q} d(P_i)$. If $P_i=P_{\max}$ and $P_j=P_{\min}$ then we will write $P^+_{\max}(w)=P^+_i(w_i)$ and $P^+_{\min}(w)=P^+_j(w_i)$, respectively. Before giving the construction of summary and query algorithms we state two lemmas:

\begin{lemma} \label{lem:UnionBound}
  $|P^+_{\max}(w/2)| \leq |\cup \Q^+(w/2)| \leq k \cdot |P^+_{\max}(w/2)|$.
\end{lemma}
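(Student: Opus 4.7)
The plan is to establish both inequalities directly from the definition of $P_{\max}$ as the polygon in $\Q$ maximizing $|P_i^+(w_i/2)|$.

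For the lower bound, I would note that since $P_{\max}\in\Q$, the expanded region $P^+_{\max}(w/2)$ is one of the sets whose union defines $\cup\Q^+(w/2)$. Hence every integer point in $P^+_{\max}(w/2)$ is also an integer point of $\cup\Q^+(w/2)$, giving $|P^+_{\max}(w/2)|\leq|\cup\Q^+(w/2)|$.

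For the upper bound, I would apply a simple union-bound-style counting argument: the set of integer points of $\cup\Q^+(w/2)$ is the union, over the $k$ polygons $P_i\in\Q$, of the integer point sets of $P_i^+(w_i/2)$. Thus
\[
|\cup\Q^+(w/2)| \;\leq\; \sum_{P_i\in\Q}|P_i^+(w_i/2)| \;\leq\; k\cdot |P^+_{\max}(w/2)|,
\]
where the final inequality uses the definition of $P_{\max}$ as the argmax of $|P_i^+(w_i/2)|$ over $\Q$.

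There is no real obstacle here; the lemma is a direct consequence of the definitions of $P_{\max}$ and of the union, combined with subadditivity of the integer-point counting measure. The only thing one must be careful about is the slightly unusual notation $\cup\Q^+(w/2)=\cup_{P_i\in\Q}P_i^+(w_i/2)$, where each polygon is expanded by its \emph{own} $w_i$, so that the inclusion $P^+_{\max}(w/2)\subseteq\cup\Q^+(w/2)$ holds by construction rather than requiring any monotonicity argument in $w$.
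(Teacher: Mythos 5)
Your proof is correct and is exactly the argument the paper intends; in fact, this is the one lemma in Section~5 whose proof the authors omit entirely (only Lemmas~\ref{lem:area2points}, \ref{lem:ApxArea} and~\ref{lem:IntersectionBound} are proved in the appendix), presumably because it follows immediately from $P_{\max}\in\Q$ and subadditivity of the integer-point count over the $k$ sets. Your careful note about the $\cup\Q^+(w/2)=\bigcup_{P_i\in\Q}P_i^+(w_i/2)$ notation, with each polygon expanded by its own $w_i$, is the only point worth spelling out, and you handle it correctly.
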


\begin{lemma} \label{lem:IntersectionBound} If $\cap Q^-(w) \neq \emptyset$ and $\phi \cdot d_{\min}>\sqrt 8$ then
   $\frac {\phi^2}{2}\cdot |P_{\min}^+(w/2)| \leq |\cap \Q^+(w/2)|  \leq|P_{\min}^+(w/2)|$.
\end{lemma}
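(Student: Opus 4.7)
The bound splits into an easy upper bound and a more delicate lower bound. For the upper bound, observe that $P_{\min}$ is itself one of the polygons in $\Q$, so $\cap \Q^+(w/2) = \bigcap_{P_i\in\Q} P_i^+(w_i/2) \subseteq P_{\min}^+(w/2)$, and hence $|\cap \Q^+(w/2)| \leq |P_{\min}^+(w/2)|$.

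For the lower bound I would start by picking a witness $q \in \cap \Q^-(w)$ supplied by the hypothesis. By definition of the inner range, the closed disc $B(q, w_i)$ of radius $w_i = \phi\, d(P_i)$ lies inside $P_i$ for every $P_i \in \Q$. Setting $w_{\min} = \phi\, d_{\min}$, we get $B(q, w_{\min}) \subseteq B(q, w_i) \subseteq P_i \subseteq P_i^+(w_i/2)$ for each $P_i$, so $B(q, w_{\min}) \subseteq \cap \Q^+(w/2)$, and therefore $|\cap \Q^+(w/2)| \geq |B(q, w_{\min})|$. Since $w_{\min} > \sqrt{8}$, applying the middle inequality of Lemma~\ref{lem:ApxArea} through a polygonal approximation of the disc yields $|B(q, w_{\min})| \geq A(B(q, w_{\min})) = \pi \phi^2 d_{\min}^2$.

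To upper bound $|P_{\min}^+(w/2)|$, I would exploit the defining property $P_{\min} = \argmin_{P_i\in\Q} |P_i^+(w_i/2)|$, which gives $|P_{\min}^+(w/2)| \leq |P_b^+(w_b/2)|$, where $P_b$ is the polygon realising the smallest diameter $d_{\min}$ (and $w_b = w_{\min}$). Lemma~\ref{lem:ApxArea} then gives $|P_b^+(w_b/2)| \leq A(P_b^+(w_b))$, and passing to the convex hull of $P_b$---whose diameter is $d_{\min}$, whose area is at most $\pi d_{\min}^2/4$ by the isodiametric inequality, and whose perimeter is at most $\pi d_{\min}$ by Cauchy's formula---together with Steiner's formula for the Minkowski sum with a disc of radius $w_b$ gives
\begin{equation*}
A(P_b^+(w_b)) \;\leq\; \frac{\pi d_{\min}^2}{4} + \pi d_{\min} w_b + \pi w_b^2 \;=\; \pi d_{\min}^2\,(\tfrac{1}{4} + \phi + \phi^2).
\end{equation*}

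The main obstacle is producing the constant $\phi^2/2$. The key observation is that the hypothesis $\cap \Q^-(w) \neq \emptyset$ forces $\phi \leq 1/2$: the inclusion $B(q, w_b) \subseteq P_b$ implies $2 w_b \leq d(P_b) = d_{\min}$, i.e., $2\phi \leq 1$. Plugging $\phi \leq 1/2$ into the previous estimate bounds $A(P_b^+(w_b))$ by $\pi d_{\min}^2$, and hence $|P_{\min}^+(w/2)| \leq \pi d_{\min}^2$. Combining,
\begin{equation*}
|\cap \Q^+(w/2)| \;\geq\; \pi \phi^2 d_{\min}^2 \;\geq\; \frac{\phi^2}{2}\,\pi d_{\min}^2 \;\geq\; \frac{\phi^2}{2}\, |P_{\min}^+(w/2)|,
\end{equation*}
which is the required lower bound.
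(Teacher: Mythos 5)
Your proof is correct and, while it shares the skeleton of the paper's argument (inscribed ball at the witness point for the lower bound, trivial containment for the upper bound, lattice-count/area conversion via Lemma~\ref{lem:ApxArea}), the key area estimates are obtained by a genuinely different route. The paper bounds $A(P^+_{\min}(w/2))$ by sandwiching the set in a lens formed by two discs of radius equal to its diameter, which gives the factor $\tfrac{\pi}{2}((1+\phi)d_{\min})^2$; and it uses the larger inscribed ball of radius $\tfrac{3}{2}\phi d_{\min}$, so the final inequality goes through for all $\phi\le 1$. You instead invoke the isodiametric inequality, Cauchy's perimeter formula, and Steiner's formula on the convex hull, and then close the gap by observing that $\cap\Q^-(w)\neq\emptyset$ already forces $\phi\le 1/2$ --- a structural fact the paper never uses. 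Both chains are valid, but yours makes one improvement worth flagging: the paper silently treats $P_{\min}$ (the minimizer of the lattice count $|P_i^+(w_i/2)|$) as though its diameter were $d_{\min}$ when writing $A(P^+_{\min}(w/2))\le \tfrac{\pi}{2}((1+\phi)d_{\min})^2$, but the count-minimizer need not be the diameter-minimizer. You explicitly separate the two, introduce $P_b$ with $d(P_b)=d_{\min}$, and use $|P_{\min}^+(w/2)|\le|P_b^+(w_b/2)|$ before passing to areas; this repairs the argument cleanly. One small caveat applies equally to both proofs: the step $|B(q,w_{\min})|\ge A(B(q,w_{\min}))$ (and the paper's analogous $|\cap\Q^+(w/2)|\ge A(\cap\Q^+(w/2))$) is not literally a consequence of Lemma~\ref{lem:ApxArea}, which is stated only for a buffered polygon $P^+(w/2)$; the correct statement loses an $O(w_{\min})$ additive term, which the slack in your final inequality (you only use $\pi\ge\pi/2$) easily absorbs, but it deserves a word.
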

We will use the rectangle-efficient consistent sampling technique described in Section~\ref{sec:SubsamplingGrids} to generate a summary of $\P$  to estimate the area of $\cap \Q$ or $\cup \Q$, where $\Q$ is a given subset of~$\P$.

The idea of the construction algorithm for the summary is simple. Let $X=\cap \Q^+(w/2)$ or $X=\cup \Q^+(w/2)$ depending on the query and, assume that $\phi \cdot d_{\min}>\sqrt 8$. In a preprocessing step construct a summary of $\P$, denoted $\S$. The summary $\S$ will contain consistent samples for a number of different sampling rates. To answer a query, pick a minimum sampling rate $1/p$ that guarantees that the expected number of consistent samples in $X$ is small but sufficient to guarantee an $(\eps,\delta)$-estimate of $|X|$. If $X$ contains enough unique consistent samples then the algorithm reports an estimate of $X$, otherwise it iteratively increases the sampling rate with a constant factor until $X$ contains sufficiently many unique consistent samples. From 
Section~\ref{ssec:concentration} we know that an $(\eps,\delta)$-estimator of $X$ requires the sampling rate to be approximately $1/(\delta \cdot \eps^2 \cdot |X|)$.

From Lemmas~\ref{lem:UnionBound} and~\ref{lem:IntersectionBound} we have that the smallest area that will ever be considered in a query $\Q$ has size at least $f_{\min}=\frac {\phi^2}{2}|P^+_{\min}(w/2)|$ and the largest area is at most $f_{\max}=n \cdot |P^+_{\max}(w/2)|$. To get an $(\eps,\delta)$-estimate of $|X|$ at least $1/\delta^2\eps$ unique consistent samples are required to lie within~$X$.
As output from the above algorithm we get two data structures:
 \begin{itemize}
   \item $p[\ell]$: Returns a prime number between $[2^{\ell-1},2^{\ell}]$.
   \item $\S[P_i,\ell]$: Returns the set of consistent samples within $P^+_i(w_i/2)$, i.e., points satisfying the equation $(ax+by+c) \mod p[j]=0$. If the set is empty it returns \textsc{False}.
 \end{itemize}

\subparagraph*{Complexity.}
Consider the total number of consistent samples generated for a polygon $P_i$. The number of consistent samples is expected to increase with a factor of two in each iteration of the algorithm, that is, the expected total number of consistent samples form an exponentially growing geometric series which sums to $O(\frac{1}{\phi^2\delta^2\eps} \cdot \frac{|P_i|}{|P_{\min}|})$. Summing up over all the polygons, the total number of consistent samples is bounded by $O(\frac{n}{\phi^2\delta^2\eps} \cdot \frac{|P_{\max}|}{|P_{\min}|})$, which is also the expected size of the summary.

For the time complexity we first note that the above procedure can be implemented such that iterations where no consistent samples are expected to be generated are omitted without consideration. Since at least a fraction of $\phi/2$ of all consistent samples in the minimal bounding box of $P^+_i(w_i/2)$ is expected to lie within $P^+_i(w_i/2)$ (can be shown using a similar argument as in the proof of Lemma~\ref{lem:IntersectionBound}) the total number of generated consistent samples is expected to be at most a factor of $2/\phi$ greater than the number of consistent samples in the summary. Each consistent sample requires at most $O(\log |P_i|)$ time to generate, according to Theorem~\ref{lemma:subsampling-rectangle}. If we assume that testing if a consistent sample lies inside a polygon can be done in time $t$ then the expected time to build a summary of $P$ is $O(\frac {n}{\phi^2\delta^2\eps} \cdot \frac{|P_{\max}|}{|P_{\min}|} \cdot (t+\log |P_{\max}|) )$.
A description of union and intersection queries can be found in
\conf{the full version of this paper~\cite{geomLSHarxiv}.}
\full{the appendix.}
We can now summarize the results in this section:

\begin{theorem} \label{thm:polygon}.
Given a set $\P=\{P_1, \ldots , P_n\}$ of polygons and three constants $\eps, \delta>0$ and $0 < \phi \leq 1$. If $\phi\cdot d(P_i) \geq \sqrt{8}$ for all $P_i \in \P$ then, in the fuzzy model with respect to $\phi$, there exists a summary of size $O(\frac {n}{\phi^2\delta^2\eps} \cdot \frac{|P_{\max}|}{|P_{\min}|} \cdot (t+ \log |P_{\max}|))$ such that for any subset $\Q$ of $\P$ containing $k\leq n$ polygons an $(\eps,\delta)$-estimate of $\cup Q$ can be computed in $O(k/\delta\eps^2)$ expected time and an $(\eps,\delta)$-estimate of $\cap Q$ can be computed in $O(\frac{k}{\phi^2\delta\eps^2})$ expected time.
\end{theorem}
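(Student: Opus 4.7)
The plan is to (a) make the summary $\S$ and the two query algorithms explicit, (b) use the area lemmas together with the concentration bound of Section~\ref{ssec:concentration} to establish correctness, and (c) bound query time and summary size by a geometric sum over sampling scales, relying on Lemmas~\ref{lem:UnionBound} and~\ref{lem:IntersectionBound} to relate the target quantity to $|P_{\min}^{\Q,+}(w/2)|$ or $|P_{\max}^{\Q,+}(w/2)|$.

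First I would make the summary precise. For each polygon $P_i$ and each scale $\ell$ ranging from $\ell_{\min}=\Theta(\log f_{\max})$ down to $\ell_{\max}=\Theta(\log f_{\min})$, the summary stores a prime $p[\ell]\in[2^{\ell-1},2^\ell]$, a hash function $h_\ell(x,y)=(a_\ell x+b_\ell y+c_\ell)\bmod p[\ell]$ from $\H_2$ that is shared across all polygons at scale $\ell$, and the set $\S[P_i,\ell]=\{(x,y)\in P_i^+(w_i/2) \;|\; h_\ell(x,y)=0\}$ computed by applying Lemma~\ref{lemma:subsampling-rectangle} to the bounding box of $P_i^+(w_i/2)$ and filtering the output via the $O(t)$ point-in-polygon test. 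For a query $\Q$ of size $k$, the algorithm scans $\ell$ from $\ell_{\min}$ upward and, at each scale, aggregates either $U_\ell=\bigcup_{P_i\in\Q}\S[P_i,\ell]$ (union query) or $C_\ell=\bigcap_{P_i\in\Q}\S[P_i,\ell]$ (intersection query), stopping at the smallest $\ell$ for which the aggregate contains at least $c/(\delta\eps^2)$ elements for a suitable constant $c$, and outputs $p[\ell]\cdot |U_\ell|$ or $p[\ell]\cdot|C_\ell|$.

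Next I would establish correctness. The key observation is that because a single hash $h_\ell$ is shared across every $P_i\in\Q$, $U_\ell$ is exactly the consistent sample of $(\cup\Q)^+(w/2)$ and $C_\ell$ is exactly the consistent sample of $(\cap\Q)^+(w/2)$ at rate $1/p[\ell]$ --- this is the point of coordinated sampling. Lemma~\ref{lem:concentration} then guarantees that $p[\ell]\cdot|U_\ell|$ and $p[\ell]\cdot|C_\ell|$ are $(\eps,\delta)$-estimators of $|(\cup\Q)^+(w/2)|$ and $|(\cap\Q)^+(w/2)|$ respectively, provided the stopping scale $\ell^\star$ satisfies $p[\ell^\star]\leq \delta\eps^2|X|$, which is precisely what the stopping rule enforces. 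Lemma~\ref{lem:ApxArea}, applied polygon by polygon, then certifies that the estimated quantities lie inside $[A(\cup\Q^-(w)),A(\cup\Q^+(w))]$ and $[A(\cap\Q^-(w)),A(\cap\Q^+(w))]$, so the output is a legal answer in the fuzzy model.

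For the complexity I would invoke Lemmas~\ref{lem:UnionBound} and~\ref{lem:IntersectionBound}. For union: since $|(\cup\Q)^+(w/2)|\geq |P_i^+(w_i/2)|$ for every $P_i\in\Q$, at $\ell^\star$ each $|\S[P_i,\ell^\star]|$ has expectation $O(1/(\delta\eps^2))$, giving $O(k/(\delta\eps^2))$ expected hash-table operations. For intersection: $|(\cap\Q)^+(w/2)|\geq (\phi^2/2)|P_{\min}^{\Q,+}(w/2)|$ forces $p[\ell^\star]=\Theta(\phi^2\delta\eps^2|P_{\min}^{\Q,+}(w/2)|)$, so the smallest set $\S[P_{\min}^{\Q},\ell^\star]$ has expected size $O(1/(\phi^2\delta\eps^2))$; iterating through it while testing membership in the $k-1$ other hash tables yields the claimed $O(k/(\phi^2\delta\eps^2))$ bound. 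The summary size and construction time then follow by summing the per-scale sample counts over all polygons: because doubling the sampling probability doubles the expected number of samples, the per-polygon contribution is an exponentially growing geometric series dominated by the final term, summing to $O\!\left(\frac{n}{\phi^2\delta^2\eps}\cdot\frac{|P_{\max}|}{|P_{\min}|}\right)$ samples, each produced in $O(t+\log|P_{\max}|)$ time by Lemma~\ref{lemma:subsampling-rectangle}. The main obstacle I anticipate is controlling sampling fluctuations at the stopping scale: a bad stopping event (too few or too many samples at the first scale that passes the threshold) would bias the estimator, but inflating $c$ by a small constant factor and applying the Chebyshev tail already present in Lemma~\ref{lem:concentration} to the worst offending scale lets this failure probability be absorbed into $\delta$.
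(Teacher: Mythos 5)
Your proposal follows essentially the same route as the paper: a per-polygon, multi-scale consistent-sample summary built via Lemma~\ref{lemma:subsampling-rectangle}, correctness via the coordinated-sampling identity plus Lemmas~\ref{lem:concentration} and~\ref{lem:ApxArea}, query-time bounds via Lemmas~\ref{lem:UnionBound} and~\ref{lem:IntersectionBound}, and size/time bounds via a geometric series over scales. Two small remarks: your scale-scanning direction is stated inconsistently ($\ell_{\min}$ is by your definition the \emph{largest} index, so the scan should proceed downward and stop at the largest $\ell$ passing the threshold, not ``upward'' to the ``smallest''), and the adaptive-stopping bias you flag at the end is a genuine subtlety that the paper itself leaves implicit --- your sketch of how to absorb it into $\delta$ (inflating the threshold constant and union-bounding Chebyshev over the $O(\log(f_{\max}/f_{\min}))$ candidate scales) is the right idea, though not spelled out in either treatment.
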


\full{
\section{Conclusion and open problems}

We have investigated efficient methods for consistent sampling and locality-sensitive hashing of 2-dimensional point sets.
Though the methods are simple, it is not clear if they are as useful in practice as, say, minwise hashing.
In addition to practicality, some theoretical questions remain, for example whether the additive constant $\eps$ in our theorems can be avoided.
Further, our measure of similarity among point sets is by no means the only possible one --- it would be interesting
to consider notions of similarity that are invariant under rotations and translations of the point set.
}%
\subparagraph*{Acknowledgement.} We thank the anonymous reviewers for their useful comments.

\bibliographystyle{plainurl}
\bibliography{geomLSH}

\begin{thebibliography}{10}

\bibitem{am-ars-00}
S.~Arya and D.~M. Mount.
\newblock Approximate range searching.
\newblock {\em Computational Geometry -- Theory and Applications}, 17:135--152,
  2000.

\bibitem{Bachrach2013}
Y.~Bachrach and E.~Porat.
\newblock Sketching for big data recommender systems using fast pseudo-random
  fingerprints.
\newblock In {\em Proceedings of 40th International Colloquium on Automata,
  Languages, and Programming (ICALP)}, pages 459--471. Springer, 2013.

\bibitem{b-akrp-77}
J.~L. Bentley.
\newblock Algorithms for {K}lee's rectangle problems.
\newblock Unpublished note, Computer Science Department, Carnegie Mellon
  University, 1977.

\bibitem{bf-avuih-10}
K.~Bringmann and T.~Friedrich.
\newblock Approximating the volume of unions and intersections of
  high-dimensional geometric objects.
\newblock {\em Computational Geometry -- Theory and Applications},
  43(6-7):601--610, 2010.

\bibitem{broder1997resemblance}
A.~Z. Broder.
\newblock On the resemblance and containment of documents.
\newblock In {\em Proceedings of International Conference on Compression and
  Complexity of Sequences (SEQUENCES)}, pages 21--29. IEEE, 1997.

\bibitem{broder2000min}
A.~Z. Broder, M.~Charikar, A.~M. Frieze, and M.~Mitzenmacher.
\newblock Min-wise independent permutations.
\newblock {\em Journal of Computer and System Sciences}, 60(3):630--659, 2000.

\bibitem{broder1997syntactic}
A.~Z. Broder, S.~C. Glassman, M.~S. Manasse, and G.~Zweig.
\newblock Syntactic clustering of the web.
\newblock {\em Computer Networks and ISDN Systems}, 29(8):1157--1166, 1997.

\bibitem{carter1977universal}
J.~L. Carter and M.~N. Wegman.
\newblock Universal classes of hash functions.
\newblock In {\em Proceedings of 9th ACM Symposium on Theory of Computing
  (STOC)}, pages 106--112. ACM, 1977.

\bibitem{c-kmpme-13}
T.~M. Chan.
\newblock Klee's measure problem made easy.
\newblock In {\em Proceedings of 54th {IEEE} Symposium on Foundations of
  Computer Science (FOCS)}, pages 410--419, 2013.

\bibitem{cb-arnrn-09}
E.~Charrier and L.~Buzer.
\newblock Approximating a real number by a rational number with a limited
  denominator: A geometric approach.
\newblock {\em Discrete Applied Mathematics}, 157:3473--3484, 2009.

\bibitem{csx-gadbd-05}
D.~Z. Chen, M.~H.~M. Smid, and B.~Xu.
\newblock Geometric algorithms for density-based data clustering.
\newblock {\em International Journal of Computational Geometry and
  Applications}, 15(3):239--260, 2005.

\bibitem{cohen1997size}
E.~Cohen.
\newblock Size-estimation framework with applications to transitive closure and
  reachability.
\newblock {\em J. Comp. Syst. Sci.}, 55(3):441--453, 1997.

\bibitem{cohen2007summarizing}
E.~Cohen and H.~Kaplan.
\newblock Summarizing data using bottom-k sketches.
\newblock In {\em Proceedings of 26th annual ACM Symposium on Principles of
  Distributed Computing (PODC)}, pages 225--234. ACM, 2007.

\bibitem{dahlgaard2014approximately}
S.~Dahlgaard and M.~Thorup.
\newblock Approximately minwise independence with twisted tabulation.
\newblock In {\em Proceedings of 14th Scandinavian Symposium and Workshops on
  Algorithm Theory (SWAT)}, pages 134--145. Springer International Publishing,
  2014.

\bibitem{d-sasq-07}
T.~K. Dang.
\newblock Solving approximate similarity queries.
\newblock {\em Computer Systems Science and Engineering}, 22(1-2):71--89, 2007.

\bibitem{fm-strso-10}
S.~A. Friedler and D.~M. Mount.
\newblock Spatio-temporal range searching over compressed kinetic sensor data.
\newblock In {\em Proceedings of 18th Annual European Symposium on Algorithms
  (ESA)}, pages 386--397. Springer Berlin Heidelberg, 2010.

\bibitem{geomLSHarxiv}
Joachim Gudmundsson and Rasmus Pagh.
\newblock Range-efficient consistent sampling and locality-sensitive hashing
  for polygons.
\newblock {\em CoRR}, abs/1701.05290, 2017.

\bibitem{haeupler2014consistent}
B.~Haeupler, M.~Manasse, and K.~Talwar.
\newblock Consistent weighted sampling made fast, small, and easy.
\newblock {\em arXiv:1410.4266}, 2014.

\bibitem{h-ssr-13}
J.~Hershberger.
\newblock Stable snap rounding.
\newblock {\em Computational Geometry}, 46(4):403--–416, 2013.

\bibitem{MR2002c:68104}
P.~Indyk.
\newblock A small approximately min-wise independent family of hash functions.
\newblock {\em Journal of Algorithms}, 38(1):84--90, 2001.

\bibitem{ioffe2010improved}
Sergey Ioffe.
\newblock Improved consistent sampling, weighted minhash and {L1} sketching.
\newblock In {\em Proceedings of 10th {IEEE} International Conference on Data
  Mining (ICDM)}, pages 246--255, 2010.

\bibitem{k-cmcl-77}
V.~Klee.
\newblock Can the measure of $\cup[a_i, b_i]$ be computed in less than $o(n
  \log n)$ steps?
\newblock {\em American Mathematical Monthly}, 84:284--285, 1977.

\bibitem{oy-nubkm-91}
M.~H. Overmars and C.-K. Yap.
\newblock New upper bounds in {K}lee's measure problem.
\newblock {\em SIAM Journal on Computing}, 20(6):1034--1045, 1991.

\bibitem{pavan2007range}
A.~Pavan and S.~Tirthapura.
\newblock Range-efficient counting of distinct elements in a massive data
  stream.
\newblock {\em SIAM Journal on Computing}, 37(2):359--379, 2007.

\bibitem{p-gzz-98}
G.~Pick.
\newblock Geometrisches zur {Z}ahlenlehre.
\newblock {\em Sitzenber.~Lotos (Prague)}, 19:311--319, 1889.

\bibitem{thorup2013bottom}
M.~Thorup.
\newblock Bottom-$k$ and priority sampling, set similarity and subset sums with
  minimal independence.
\newblock In {\em Proceedings of 45th ACM Symposium on Theory of Computing
  (STOC)}, pages 371--380. ACM, 2013.

\bibitem{tirthapura2012rectangle}
S.~Tirthapura and D.~Woodruff.
\newblock Rectangle-efficient aggregation in spatial data streams.
\newblock In {\em Proceedings of 31st Symposium on Principles of Database
  Systems (PODS)}, pages 283--294. ACM, 2012.

\bibitem{lw-mprr-81}
J.~van Leeuwen and D.~Wood.
\newblock The measure problem for rectangular ranges in $d$-space.
\newblock {\em Journal of Algorithms}, 2(3):282--300, 1981.

\bibitem{z-nvilp-00}
N.~Y. Zolotykh.
\newblock On the number of vertices in integer linear programming problems.
\newblock Technical report, University of Nizhni Novograd, 2000.

\end{thebibliography}

\full{
\newpage
\appendix

\section{2-independence of $\H_2$}\label{app:GridPairwiseIndependence}

\begin{lemma} \label{lem:GridPairwiseIndependence}
The family $\H_2$ defined in (\ref{eq:h2}) is 2-independent.
\end{lemma}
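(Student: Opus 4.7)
\textbf{Proof plan for Lemma~\ref{lem:GridPairwiseIndependence}.}
The plan is the standard Carter--Wegman counting argument, adapted to points in $\F_p \times \F_p$. Fix two distinct points $(x_1,y_1), (x_2,y_2) \in U \times U$ and two (not necessarily distinct) target values $a_1, a_2 \in U$. I want to count the number of triples $(a,b,c)\in U^3$ satisfying the system
\begin{align*}
 a x_1 + b y_1 + c &\equiv a_1 \pmod p, \\
 a x_2 + b y_2 + c &\equiv a_2 \pmod p.
\end{align*}
Since $(a,b,c)$ is drawn uniformly from $U^3$, showing that exactly $p$ triples satisfy this system gives probability $p/p^3 = 1/p^2 = 1/|U|^2$, as required.

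Subtracting the two equations eliminates $c$ and yields the single linear constraint
$a(x_1-x_2) + b(y_1-y_2) \equiv a_1-a_2 \pmod p$.
Since $(x_1,y_1)\neq (x_2,y_2)$, at least one of $x_1-x_2$ and $y_1-y_2$ is a nonzero element of $\F_p$, hence invertible (because $p$ is prime). The first key step is to split into cases on which coordinate differs: if $x_1\neq x_2$, then for every choice of $b\in U$ the value of $a$ is determined uniquely; if $x_1 = x_2$ but $y_1\neq y_2$, then for every choice of $a\in U$ the value of $b$ is determined uniquely. Either way, the subtracted equation has exactly $p$ solutions $(a,b)\in U^2$.

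The second key step is to observe that once $(a,b)$ is fixed, either of the original equations determines $c$ uniquely modulo $p$, and (by construction) both equations then hold. Hence the full system has exactly $p$ solutions $(a,b,c)\in U^3$, completing the count and establishing 2-independence.

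There is no real obstacle: the only subtlety is confirming that eliminating $c$ does not lose solutions, which is immediate because subtracting is invertible (add one of the original equations back to recover it), and that the case split covers $(x_1,y_1)\neq(x_2,y_2)$ exhaustively. The proof is essentially identical to the one-dimensional Carter--Wegman argument, with the extra free variable $c$ contributing the final factor of $p$ in the count.
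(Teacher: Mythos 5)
Your proof is correct and takes essentially the same route as the paper's: both are the Carter--Wegman counting argument showing exactly $p$ triples $(a,b,c)$ solve the two-equation system, yielding probability $1/p^2$. The only cosmetic difference is that you eliminate $c$ and then case-split on which coordinate differs, whereas the paper assumes $x_1 \neq x_2$ without loss of generality, fixes $b$, and solves a $2\times 2$ system for $(a,c)$ — the same count via a slightly different parametrization of the solution set.
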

\begin{proof}
 Let us check that our hash function is in fact 2-independent. Let $x_1, x_2, y_1, y_2 \in U$ and $t_1, t_2 \in \F_p$ s.t. $q_1=(x_1,y_1) \neq q_2=(x_2,y_2)$.
 Let $q^{-1} \in \F_p$ be the unique multiplicative inverse of $q$.
 Note that this is guaranteed to exist if and only if $q$ is non-zero. What is the probability that $h(q_1) = t_1$ and $h(q_2) = t_2$?
 We have $(ax_1+by_1+c) \mod p =t_1$ and $(ax_2+by_2+c) \mod p =t_2$. Since $q_1\neq q_2$ we may assume without loss of generality that $x_1\neq x_2$. Now fix $b$. We get:
 \[
   \left( \begin{array}{cc}
        1 & x_1 \\
        1 & x_2
  \end{array} \right)
  \left( \begin{array}{c}
        a  \\
        c
  \end{array} \right)
  =
  \left( \begin{array}{c}
        t_1-by_1 \\
        t_2-by_2
  \end{array} \right)
\]

For every $b$ there exists exactly one pair $a, c$ such that the above equality holds. Since $a$ and $c$ are drawn uniformly and independently from $\F_p$, this probability is~$1/p^2$.
\hfill \qed \end{proof}

\section{Omitted material from Section~\ref{sec:polygons}}

\subsection{Proof of Lemma~\ref{lem:area2points}}
\begin{proof}
We use Pick's theorem~\cite{p-gzz-98}. Let $i(P)$ be the number of integer coordinates in the interior of $P$ and let $b(P)$ be the number of integer coordinates on the boundary of $P$. Pick's theorem states:
      $$A(P) = i(P)+\frac{b(P)}{2}-(h+1),$$
where $h$ is the number of holes in $P$. Since $A(P) < i(P)+ b(P) = |P|$ the lemma follows.
\end{proof}

\subsection{Proof of Lemma~\ref{lem:ApxArea}}
\begin{proof}
The first inequality is immediate and the second inequality follows from Lemma~\ref{lem:area2points}. For the third inequality we note that any point in the plane within distance $\sqrt 2$ from an integer coordinate within $P^+(w/2)$ will lie within $P^+(w)$, hence $|P^+(w/2)| \leq A(P^+(w))$.
\end{proof}

\subsection{Proof of Lemma~\ref{lem:IntersectionBound}}

\begin{proof}
Since $\cap \Q^+(w/2) \subseteq P_{\min}^+(w/2)$ the second inequality is immediate. For the first inequality we first observe that
 \begin{equation}\label{first}
   A(P^+_{\min}(w/2)) \leq  \frac{\pi}{2} ((1+\phi) \cdot d_{\min})^2.
 \end{equation}
To see this let $a$ and $b$ be two points on the boundary of $P^+_{\min}$ with largest inter-point distance among all points in $P^+_{\min}(w/2)$. The distance between $a$ and $b$ is at most $(1+\phi) \cdot d_{\min}$. Note that $P^+_{\min}(w/2)$ is enclosed in the intersection of the two disks of radius $|ab|$ centered at $a$ and $b$, hence,~(\ref{first}) follows.

Next consider a point $p$ in the non-empty set $\cap \Q^-(w)$. By definition $p$ must lie in $P_i^-(w)$ for every $P_i \in \Q$. As a result the ball of radius $\frac{3}{2}\phi \cdot d_{\min}$ and center at $p$ must lie in $\cap \Q$, thus
 \begin{equation}\label{second}
   A(\cap\Q^+(w/2)) \geq \pi (\frac{3}{2} \cdot \phi \cdot d_{\min})^2.
 \end{equation}

 Using the same argument as in the proof of Lemma~\ref{lem:ApxArea} we can show:
  \begin{equation}\label{third}
  |P^+_{\min}(w/2)|\leq A(P^+_{\min}(w/2+\sqrt{2})) \leq \frac{\pi}{2} ((1+\phi) \cdot d_{\min})^2 < \pi ((1+\phi)\cdot d_{\min})^2,
  \end{equation}
 where the second inequality follows from~(\ref{first}).

 Now we are ready to prove the first inequality of the statement of the lemma.
 \begin{eqnarray*}
   |\cap \Q^+(w/2)| & \geq & A(\cap \Q^+(w/2)) \hspace{1.8cm} \textrm{[from Lemma}~\ref{lem:ApxArea}] \\
                    & \geq & \pi (\frac{3}{2} \cdot \phi \cdot d_{\min})^2 \hspace{1.7cm} \textrm{[from}~(\ref{second})]\\
                    &  >   & \frac{\phi^2}{2}\cdot \pi((1+\phi) \cdot d_{\min})^2 \\
                    &  >   & \frac{\phi^2}{2}\cdot |P^+_{\min}(w/2)|  \hspace{1.5cm} \textrm{[from}~(\ref{third})]
 \end{eqnarray*}
This completes the proof of the lemma.
\end{proof}

\subsection{Union queries} \label{ssec:union}

A union query is a subset $\Q$ of $\P$ containing $k\leq n$ polygons.
Let $f^{\cup}_{\min}=|P^+_{\max}(w/2)|$, let  $f^{\cup}_{\max}=k \cdot |P^+_{\max}(w/2)|$ and, let $X=\cup Q^+(w/2)$. The search for a good size consistent sample starts by selecting a $j \in \mathbb{Z}$ such that $2^{j-1} \leq c \cdot \delta\eps^2 \cdot f^{\cup}_{max}  \leq 2^{j}$, for some constant $c$. The expected total number of consistent samples, $\sum_{P_i \in P} \S[P_i,p[j]]$, is $O(1/\delta\eps^2)$. This is the initial sampling rate tested by the algorithm. If the number of unique consistent samples in $X$ is at least $\frac {1}{\delta \eps^2}$ then the algorithm returns the estimate $|X|_{S} \cdot p[j]$, where $|X|_{S}$ is the number of unique consistent samples in $X$. Otherwise, the algorithm increases the sampling probability iteratively with a factor of approximately two, by decreasing $j$ by one, until the number of unique consistent samples in $X$ is at least $\frac {1}{\delta \eps^2}$. At that point the algorithm returns the estimate $|X|_{S} \cdot p[j]$.

In each iteration the number of unique consistent samples is $O(1/\delta\eps^2)$, which implies that the expected total number of consistent samples considered in one iteration is $O(k/\delta\eps^2)$. As noted above the number of consistent samples increases by roughly a factor of two in each iteration, thus the total number of consistent samples considered can be described by an exponentially growing geometric function which has an upper bound of $O(k/\delta\eps^2)$. The size of the union of consistent samples can be computed in expected linear time with respect to the total number of consistent samples, thus $O(k/\delta\eps^2)$ time in total.

\subsection{Intersection queries} \label{ssec:intersection}
An intersection query is handled similarly, but instead of fetching the consistent samples in all the polygons, we only fetch the consistent samples in $P^+_{\min}(w/2)$ since each unique consistent sample in $\cap \Q^+(w/2)$ must also be in $P^+_{\min}(w/2)$. Then, for each consistent sample $s \in \S[P^+_{\min}(w/2), \ell]$, check if $s$ is in $\S[P^+_i(w_i/2), \ell]$ for all $P_i\in \Q$. The membership query can be answered in constant expected time using hashing, thus the time required for one iteration is $k$ times the number of consistent samples in $P^+_{\min}(w/2)$.

Let $f^{\cap}_{\min}=\frac{\phi^2}{2}|P^+_{\min}(w/2)|$, let  $f^{\cap}_{\max}=|P^+_{\min}(w/2)|$ and, let $X=\cap Q^+(w/2)$. Again we start by selecting a $j \in \mathbb{Z}$ such that $2^{j-1} \leq c \cdot \gamma \delta\eps^2 \cdot f^{\cap}_{max}  \leq 2^{j}$, for some constant $c$.

The expected total number of consistent samples in $P^+_i(w/2)$ is $O(1/\delta\eps^2)$. This is the initial sampling rate tested by the algorithm. If the number of consistent samples in $X$ is at least $\frac {1}{\delta \eps^2}$ then the algorithm returns the number of consistent samples in $X$ times $p[j]$. Otherwise, the algorithm increases the sampling probability iteratively with a factor of approximately two, by decreasing $j$ by one, until the number of unique consistent samples in $X$ is at least $\frac {1}{\delta \eps^2}$.

Since $f^{\cap}_{\min}=\frac{\phi^2}{4}|P^+_{\min}(w/2)|$, the expected total number of consistent samples considered is bounded by $\frac{2}{\phi^2} \cdot \frac{1}{\delta\eps^2})$. To summarize the query time adds up to $O(\frac{k}{\phi^2\delta\eps^2}))$.

}
\end{document}